\documentclass[11pt,utf8,A4paper]{article}
\usepackage[utf8]{inputenc}
\usepackage[T1]{fontenc}
\usepackage{amsmath,amssymb}
\usepackage{stmaryrd}
\usepackage{amsthm}
\usepackage{graphicx}
\usepackage{float}
\usepackage{tikz-cd}
\usepackage{txfonts}
\usepackage{hyperref}

\title{Non-locality of the meet levels of the Trotter-Weil Hierarchy}
\author{João Daniel Moreira\thanks{E-mail address: \texttt{joaodanielmoreira@gmail.com}\newline Address: (c/o Prof. Jorge Almeida) Departamento de Matemática, Faculdade de Ciências, Universidade do Porto, Rua do Campo Alegre, 687, 4169-007 Porto, Portugal}}

\theoremstyle{plain}
\newtheorem{thm}{Theorem}
\newtheorem{lem}[thm]{Lemma}
\newtheorem{prop}[thm]{Proposition}
\newtheorem*{cor}{Corollary}

\theoremstyle{definition}

\theoremstyle{remark}

\def\malcev{\mathbin{\raise1pt\hbox{\footnotesize$\bigcirc$\kern-6pt\raise1pt\hbox{\tiny$m$}\kern1pt}}}

\newcommand{\widesim}[1][1.5]{\mathrel{\scalebox{#1}[1]{$\sim$}}}
\newcommand{\widensim}[1][1.5]{\mathrel{\scalebox{#1}[1]{$\nsim$}}}

\begin{document}

\maketitle

\begin{abstract}
We prove that the meet level $m$ of the Trotter-Weil hierarchy, $\mathsf{V}_m$, is not local for all $m\geq 1$, as conjectured in a paper by Kufleitner and Lauser.
In order to show this, we explicitly provide a language whose syntactic semigroup is in $L \mathsf{V}_m$ and not in $\mathsf{V}_m*\mathsf{D}$.
\end{abstract}
 
\section{Introduction}

With the advent of computer science and its applications, the study of formal languages received a significant boost in the 1960s.
An important problem that emerged is to determine whether a given regular language has a certain type of property.
One of the first results in this direction is Sch\"utzenberger's characterization of star-free languages as those whose syntactic monoid is finite and aperiodic~\cite{Schutzenberger:1965}.
This eventually led Eilenberg~\cite{Eilenberg:1976} to formulate a general framework for the algebraic characterization of classes of languages, which came to be known as Eilenberg's correspondence.
One of the problems that received most attention is the dot-depth problem of Brzozowski~\cite{Brzozowski&Knast:1978}.
Besides the combinatorial motivation stemming from the definition of the dot-depth hierarchy, there is also an important connection with logic~\cite{Thomas:1982}, relating levels of the hierarchy with quantifier alternation.
While there has been significant progress in recent years~\cite{Place&Zeitoun:2014a,Place&Zeitoun:2014b,Almeida&Bartonova&Klima&Kunc:2015}, the dot-depth problem remains open.
Meanwhile, several hierarchies for subclasses of languages have been studied, in particular the  class of disjoint unions of unambiguous products $A_0^*a_1A_1^*\cdots a_n A_n^*$ with $a_i \in A$ and $A_i \subseteq A$, whose algebraic counterpart is the famous pseudovariety $\mathsf{DA}$~\cite{Schutzenberger:1976,Tesson&Therien:2002} and whose logic version is the quantifier alternation hierarchy in two variables~\cite{Kufleitner&Lauser:2013}.
This hierarchy had previously been considered by Trotter and Weil due to its algebraic significance~\cite{Trotter&Weil:1997}.

In fact, there are two different flavours of such hierarchies: one in which the empty word is considered with corresponding algebraic counterpart of monoids, and the other without the empty word and semigroups.
The former version of the dot-depth hierarchy is known as the Straubing-Thérien hierarchy, while the latter is known as the Brzozowski hierarchy.
In general, the basic problem is to decide whether a given regular language lies in a certain level of one of these hierarchies.
In algebraic terms, to decide whether a given semigroup or monoid lies in the corresponding pseudovariety.
At the pseudovariety level, the transition between a monoid hierarchy and the corresponding semigroup hierarchy is obtained by applying the operator $\mathsf{V}\mapsto \mathsf{V}*\mathsf{D}$, where $*$ denotes semidirect product and $\mathsf{D}$ is the pseudovariety of so-called definite semigroups, in which the idempotents are right zeros~\cite{Straubing:1985,Kufleitner&Weil:2012,Kufleitner&Lauser:2013}.
This operator has been extensively studied and a key problem proposed by Eilenberg~\cite{Eilenberg:1976} is to determine the pseudovarieties of monoids $\mathsf{V}$ such that $\mathsf{V}*\mathsf{D}=L\mathsf{V}$, which are called local.
A categorical characterization of local pseudovarieties was obtained by Tilson~\cite{Tilson:1987}, who also gave a criterion for membership in $\mathsf{V}*\mathsf{D}$, known as the Delay Theorem, which is, in a sense, a categorical reformulation of earlier work of Straubing~\cite{Straubing:1985}.
Recall that $L\mathsf{V}$ consists of the semigroups $S$ whose local monoids $eSe$, for an arbitrary idempotent $e$ of $S$.
While $L\mathsf{V}$ is obviously decidable if so is $\mathsf{V}$, for non-local pseudovarieties $\mathsf{V}$ the membership problem for $\mathsf{V}*\mathsf{D}$ is more complicated.
In the particular case of the dot-depth hierarchy and the corresponding hierarchy for $\mathsf{DA}$, decidability is known to be preserved under the operator $\mathsf{V}\mapsto\mathsf{V}*\mathsf{D}$~\cite{Straubing:1985,Kufleitner&Weil:2012,Kufleitner&Lauser:2013}.

For the monoid versions of both the dot-depth (Straubing-Thérien) hierarchy and the $\mathsf{DA}$ (Trotter-Weil) hierarchy, the first non-trivial level is the famous class of piecewise testable languages with algebraic counterpart the pseudovariety $\mathsf{J}$ of finite $\mathcal{J}$-trivial monoids~\cite{Simon:1975}.
Knast~\cite{Knast:1983a} showed that $\mathsf{J}$ is not local by giving an explicit example of a language that is an element of $L\mathsf{J}$ and does not belong to $\mathsf{J}*\mathsf{D}$.
Kufleitner and Lauser~\cite{Kufleitner&Lauser:2013} conjectured a generalization of Knast's result, which states that the $m$-th meet level of the Trotter-Weil hierarchy is not local for all $m\geq 1$.
The purpose of this paper is to establish this conjecture by explicitly exhibiting languages whose syntactic semigroups are locally in the $m$-th meet level $\mathsf{V}_m$ of the Trotter-Weil hierarchy but do not lie in $\mathsf{V}_m*\mathsf{D}$.

\section{Preliminaries}

The reader is referred to standard textbooks
\cite{Pin:1986;bk,Almeida:1994a,Rhodes&Steinberg:2009qt} for general
background and undefined terminology. In particular, since no
essential knowledge on Mal'cev or semidirect products is required,
we will not go into the details of presenting such operations on
pseudovarieties.

For a semigroup $S$, we denote by $S^I$ the monoid which is obtained from $S$ by adding a new element $1$ that multiplies as an identity.

Throughout this paper, $A$ denotes a finite alphabet.
We say that a monoid homomorphism $\varphi:A^{*}\rightarrow M$ recognizes a language $L\subseteq A^{*}$ if $L=\varphi^{-1}\left( \varphi(L)\right) $. We then also say that $M$ recognizes $L$.
A language is recognizable if it is recognized by a finite monoid. It is well known that a language $L\subseteq A^{*}$ is regular if and only if it is recognizable~\cite{Pin:1986;bk}.
An immediate consequence is that the set of regular languages is closed under complementation within the corresponding free monoid.

Let $L$ be a subset of a semigroup $S$.
The congruence $\widesim_{L}$, defined on  $S$ by $u\widesim_{L}v$ if for all $ x,y\in S^I$, $xuy\in L$ if and only if $ xvy \in L$, is called the syntactic congruence of $L$. The quotient $S/{\widesim_{L}}$, denoted by $\operatorname{Synt}(L)$, is called the syntactic semigroup of $L$.

A non-empty class $\mathsf{V}$ of finite semigroups is a pseudovariety if it is closed under taking subsemigroups, homomorphic images and finite direct products.
Pseudovarieties of monoids are defined similarly.
Let $\Sigma^{+}$ be the free semigroup over a countable alphabet $\Sigma$.
We denote by $\widehat{\Sigma^+}$ the profinite completion of $\Sigma^+$, which may be described as the inverse limit of the finite $\Sigma$-generated semigroups~\cite{Almeida:2003cshort} and whose elements are called \emph{pseudowords} over $\Sigma$.
The set $\widehat{\Sigma^+}$ has a structure of compact semigroup in which $\Sigma^+$ can be naturally viewed as a dense subsemigroup.
The essential knowledge that we require about $\widehat{\Sigma^+}$ is the following characteristic universal property: for every mapping $\varphi:\Sigma\to S$ into a finite semigroup $S$, there is a unique continuous homomorphic extension $\hat{\varphi}:\widehat{\Sigma^+}\to S$, where $S$ is endowed with the discrete topology.

Given a finite index congruence $\widesim$ on a free semigroup $A^+$, one may consider the natural homomorphism $\varphi: A^+ \to A^+/{\widesim}$ and its unique extension to a continuous homomorphism $\hat{\varphi}:\widehat{A^+}\to A^+/{\widesim}$.
The kernel congruence of $\hat{\varphi}$, which is a clopen subset of $\widehat{A^+}\times \widehat{A^+}$ and the topological closure of $\widesim$, is called the clopen extension of $\widesim$ to $\widehat{A^+}$.
In the special case of a syntactic congruence $\widesim_L$, the clopen extension of $\widesim_L$ to $\widehat{A^+}$ is the syntactic congruence $\widesim_{\overline{L}}$ of the closure $\overline{L}$. 

We say that the semigroup $S$ is \emph{equidivisible} if whenever $s_1 ,s_2 , t_1 ,t_2$ are elements of $S$ such that $s_1s_2=t_1t_2$, either $s_1=t_1$ and $s_2=t_2$, or there exists $x$ in $S$ such that $s_1x=t_1$ and $s_2=xt_2$, or such that $s_1=t_1x$ and $xs_2=t_2$.
It will be useful in the sequel to take into account that the semigroup $\widehat{\Sigma^+}$ is equidivisible \cite{Almeida&ACosta:2007a}; see \cite{Almeida&ACosta:2016a} for more general results on equidivisibility of relatively free profinite semigroups.

Given $u$ and $v$ in $\widehat{\Sigma^+}$, we say that a finite semigroup $S$ satisfies the \emph{pseudoidentity} $u=v$ if for every mapping $\varphi:\Sigma \to S$, the equality $\hat{\varphi}(u)=\hat{\varphi}(v)$ holds. 
Since $\widehat{\Sigma^+}$ is an inverse limit of finite semigroups, for every $u$ in $\widehat{\Sigma^+}$ the sequence $(u^{n!})_{n}$ converges to an idempotent, which is denoted $u^\omega$.
Following the same reasoning, for every $u$ in $\widehat{\Sigma^+}$ the sequence $(u^{(n+1)!-1})_{n}$ has a limit, which is denoted $u^{\omega-1}$ and is such that $u^\omega=u^{\omega-1}u$.
Formal equalities between terms built from the letters in $\Sigma$ using only the multiplication and the $\omega$ power are examples of pseudoidentities and will be sufficient for our purposes. 

For a set $\Pi$ of pseudoidentities, we denote by $\llbracket \Pi \rrbracket$  the class of all finite semigroups or monoids (the context should make clear which) that satisfy all pseudoidentities in~$\Pi$. The following pseudovarieties of monoids play a key role in this paper:
\begin{align*}
\mathsf{R} &= \llbracket (xy)^{\omega}x=(xy)^{\omega} \rrbracket
\\\mathsf{L} &= \llbracket y(xy)^{\omega}=(xy)^{\omega} \rrbracket
\\\mathsf{J} &= \mathsf{R} \cap \mathsf{L} = \llbracket (xy)^{\omega}=(yx)^{\omega}, x^{\omega+1}=x^{\omega} \rrbracket
\\\mathsf{DA} &= \llbracket (xy)^{\omega}(yx)^{\omega}(xy)^{\omega}=(xy)^{\omega}, x^{\omega+1}=x^{\omega} \rrbracket
\\&=\llbracket (xy)^{\omega}x(xy)^{\omega}=(xy)^{\omega} \rrbracket.
\end{align*}
We will also refer to the following pseudovarieties of semigroups:
\begin{align*}
\mathsf{D} &= \llbracket yx^{\omega}=x^{\omega} \rrbracket, \\
\mathsf{K} &= \llbracket x^{\omega}y=x^{\omega} \rrbracket.
\end{align*}

For a set $A$, the set ${\cal P} (A)$ of all subsets of $A$ is a monoid under the operation of union, which is the \emph{free semilattice on $A$}. By an \emph{$A$-generated semigroup} we mean a semigroup $S$ endowed with a function $\varphi: A \to S$ such that $S$ is generated by $\varphi(A)$.
We say that the $A$-generated semigroup $S$ has a \emph{content function $c$} if $c: S\to {\cal P} (A)$ is a homomorphism such that $c(\varphi(a))=\{a\}$ for every $a$ in $A$. Abusing notation, all content functions will be denoted $c$. 
In particular, consider the content function for the free profinite semigroup $\widehat{A^+}$ which is the unique continuous extension $c$ of the function $A\to {\cal P}(A)$ sending $a$ to $\{a\}$.

The Trotter-Weil hierarchy of pseudovarieties of monoids is defined by:
\begin{align*}
\mathsf{R}_{1} &= \mathsf{L}_{1} = \mathsf{J} \\
\mathsf{R}_{m+1} &= \mathsf{K} \malcev \mathsf{L}_{m} \\
\mathsf{L}_{m+1} &=  \mathsf{D} \malcev \mathsf{R}_{m}.
\end{align*}
It can be proved that $\mathsf{R}_{2}=\mathsf{R}$ and $\mathsf{L}_{2}=\mathsf{L}$, $\mathsf{R}_{m}\vee \mathsf{L}_{m}\subseteq \mathsf{R}_{m+1}\cap \mathsf{L}_{m+1}$, and $\mathsf{R}_{m}\wedge \mathsf{L}_{m}=\mathsf{R}_{m}\cap \mathsf{L}_{m}$, where join and meet are taken in the lattice of pseudovarieties of monoids. The sublattice of the Trotter-Weil hierarchy is depicted in the following diagram (Figure \ref{diag}).

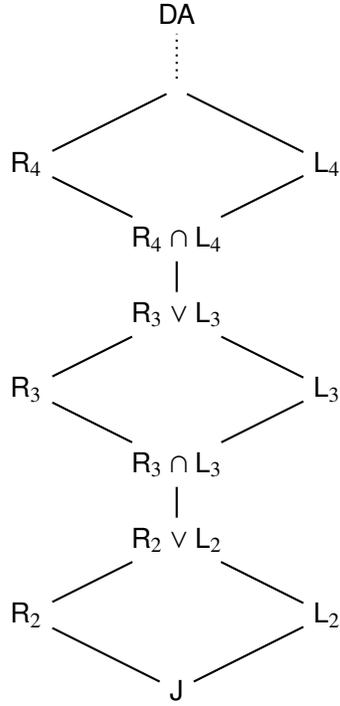
\begin{figure}[H]
	\begin{tikzpicture}	
	\node (J) at (0,0) {$\mathsf{J}$};
	\node (R2) at (-2,1) {$\mathsf{R}_2$};
	\node (L2) at (2,1) {$\mathsf{L}_2$};
	\node (join2) at (0,2) {$\mathsf{R}_{2}\vee \mathsf{L}_{2}$};
	\node (meet3) at (0,3) {$\mathsf{R}_{3}\cap \mathsf{L}_{3}$};
	\node (R3) at (-2,4) {$\mathsf{R}_3$};
	\node (L3) at (2,4) {$\mathsf{L}_3$};
	\node (join3) at (0,5) {$\mathsf{R}_{3}\vee \mathsf{L}_{3}$};
	\node (meet4) at (0,6) {$\mathsf{R}_{4}\cap \mathsf{L}_{4}$};
	\node (R4) at (-2,7) {$\mathsf{R}_4$};
	\node (L4) at (2,7) {$\mathsf{L}_4$};
	\node (empty) at (0,8) {};
	\node (DA) at (0,9) {$\mathsf{DA}$};
	
	\begin{scope}[thick]
	\draw (J) -- (R2) -- (join2)
	(J) -- (L2) -- (join2)
	(join2) -- (meet3)
	(meet3) -- (R3) -- (join3)
	(meet3) -- (L3) -- (join3)
	(join3) -- (meet4)
	(meet4) -- (R4) -- (empty)
	(meet4) -- (L4) -- (empty);
	\draw[dotted](empty) -- (DA);
	\end{scope}
	\end{tikzpicture}
	\centering
	\caption{Diagram of the Trotter-Weil hierarchy}
	\label{diag}
\end{figure}

Following Kufleitner~\cite{Kufleitner:2013}, we define the terms $U_{m},V_{m}$ in the variables $s,t$, $x_{1},\ldots,x_{m},y_{1},\ldots,y_{m}$. For $m\geq 2$, we let
\begin{align*}
U_{1} &= (sx_{1})^{\omega} s(y_{1}t)^{\omega}\\
V_{1} &= (sx_{1})^{\omega} t(y_{1}t)^{\omega}\\
U_{m} &= (U_{m-1}x_{m})^{\omega} U_{m-1} (y_{m}U_{m-1})^{\omega}\\
V_{m} &= (U_{m-1}x_{m})^{\omega} V_{m-1} (y_{m}U_{m-1})^{\omega}.
\end{align*}

\begin{thm}[\cite{Kufleitner:2013}]
	For all $m>1$, we have the following descriptions of pseudoidentities by
	\begin{align*}
	\mathsf{J}=\mathsf{R}_{1} &= \mathsf{L}_{1} = \llbracket U_{1} = V_{1} \rrbracket \\
	\mathsf{R}_{m} \cap \mathsf{L}_{m} &= \llbracket U_{m-1} = V_{m-1} \rrbracket \\
	\mathsf{R}_{m} &= \llbracket (U_{m-1}x_{m})^{\omega} U_{m-1} = (U_{m-1}x_{m})^{\omega}V_{m-1} \rrbracket \\
	\mathsf{L}_{m} &= \llbracket U_{m-1}(y_{m}U_{m-1})^{\omega} = V_{m-1}(y_{m}U_{m-1})^{\omega} \rrbracket . 
	\end{align*}
	\label{Kufleitner,UV}
\end{thm}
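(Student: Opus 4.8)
The plan is to establish the four equalities together by induction on $m$, using the recursive definitions $\mathsf{R}_{m+1}=\mathsf{K}\malcev\mathsf{L}_m$ and $\mathsf{L}_{m+1}=\mathsf{D}\malcev\mathsf{R}_m$ to carry out the inductive step. The base case is settled by explicit computations with $\omega$-terms, using the known identifications $\mathsf{R}_1=\mathsf{L}_1=\mathsf{J}$, $\mathsf{R}_2=\mathsf{R}$ and $\mathsf{L}_2=\mathsf{L}$ together with the $\omega$-term presentations of $\mathsf{J}$, $\mathsf{R}$ and $\mathsf{L}$ recalled in the Preliminaries. The inductive step transports these presentations across the two Mal'cev-product operators by means of a propagation lemma recording how $\mathsf{K}\malcev(-)$ and $\mathsf{D}\malcev(-)$ act on a pseudovariety of monoids presented by a single pseudoidentity $\alpha=\beta$ of the self-similar shape exhibited by the pairs $(U_k,V_k)$: roughly, $\mathsf{K}\malcev\llbracket\alpha=\beta\rrbracket$ is presented by prefixing $\alpha=\beta$ with the generic idempotent built from $\alpha$ and a fresh variable, and dually for $\mathsf{D}\malcev(-)$ and suffixing.

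First I would dispose of $\mathsf{J}=\llbracket U_1=V_1\rrbracket$. For the inclusion $\mathsf{J}\subseteq\llbracket U_1=V_1\rrbracket$: in $\mathsf{R}$ one has $(sx_1)^\omega s=(sx_1)^\omega$, so $U_1$ collapses to $(sx_1)^\omega(y_1t)^\omega$; in $\mathsf{L}$ one has $t(y_1t)^\omega=(y_1t)^\omega$, so $V_1$ collapses to the same pseudoword; hence $U_1=V_1$ holds throughout $\mathsf{R}\cap\mathsf{L}=\mathsf{J}$. For the reverse inclusion, since the ambient objects are monoids the variables may be set to $1$: the substitution $y_1=t=1$ turns $U_1=V_1$ into $(sx_1)^\omega s=(sx_1)^\omega$, the defining pseudoidentity of $\mathsf{R}$, and the substitution $s=x_1=1$ turns it into $(y_1t)^\omega=t(y_1t)^\omega$, the defining pseudoidentity of $\mathsf{L}$; thus $\llbracket U_1=V_1\rrbracket\subseteq\mathsf{R}\cap\mathsf{L}=\mathsf{J}$. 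For $m=2$ the other three equalities read $\mathsf{R}\cap\mathsf{L}=\llbracket U_1=V_1\rrbracket$ (just done), $\mathsf{R}=\llbracket(U_1x_2)^\omega U_1=(U_1x_2)^\omega V_1\rrbracket$ and $\mathsf{L}=\llbracket U_1(y_2U_1)^\omega=V_1(y_2U_1)^\omega\rrbracket$; I would derive each from the case $m=1$ and the $\omega$-term presentations of $\mathsf{R}$ and $\mathsf{L}$ by substitution arguments, one direction using that prefixing (resp.\ suffixing) $U_1=V_1$ by the generic idempotent $(U_1x_2)^\omega$ (resp.\ $(y_2U_1)^\omega$) with a fresh variable weakens it to a pseudoidentity valid throughout $\mathsf{R}$ (resp.\ $\mathsf{L}$), and the other recovering the defining pseudoidentity of $\mathsf{R}$ (resp.\ $\mathsf{L}$) by a further substitution.

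For the inductive step, assume the four descriptions at level $m$ and feed $\mathsf{L}_m$ into the propagation lemma for $\mathsf{K}\malcev(-)$: the pseudoidentity $U_{m-1}(y_mU_{m-1})^\omega=V_{m-1}(y_mU_{m-1})^\omega$ presenting $\mathsf{L}_m$ has its two sides sharing a common idempotent left factor, inherited from the common idempotent prefix of $U_{m-1}$ and $V_{m-1}$, so the lemma applies and presents $\mathsf{R}_{m+1}=\mathsf{K}\malcev\mathsf{L}_m$ with an idempotent prefix built from $U_{m-1}(y_mU_{m-1})^\omega$ and a fresh variable; since $U_m=(U_{m-1}x_m)^\omega\,U_{m-1}(y_mU_{m-1})^\omega$ is exactly that term with one further idempotent prefixed, a short normalisation of the presenting idempotent to $(U_mx_{m+1})^\omega$ yields $\mathsf{R}_{m+1}=\llbracket(U_mx_{m+1})^\omega U_m=(U_mx_{m+1})^\omega V_m\rrbracket$. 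The dual computation, starting from $\mathsf{R}_m=\llbracket(U_{m-1}x_m)^\omega U_{m-1}=(U_{m-1}x_m)^\omega V_{m-1}\rrbracket$ and the operator $\mathsf{D}\malcev(-)$, gives $\mathsf{L}_{m+1}=\llbracket U_m(y_{m+1}U_m)^\omega=V_m(y_{m+1}U_m)^\omega\rrbracket$. Finally, $\mathsf{R}_{m+1}\cap\mathsf{L}_{m+1}=\llbracket U_m=V_m\rrbracket$: the inclusion $\supseteq$ is trivial, since $U_m=V_m$ implies the defining pseudoidentities of both $\mathsf{R}_{m+1}$ and $\mathsf{L}_{m+1}$; for $\subseteq$, a monoid lying in $\mathsf{R}_{m+1}\cap\mathsf{L}_{m+1}$ satisfies $eU_m=eV_m$ and $U_mf=V_mf$ for the idempotents $e=(U_mx_{m+1})^\omega$ and $f=(y_{m+1}U_m)^\omega$, and these combine --- via the fact that $U_m$ and $V_m$ are $e'$-bordered on the left and $f'$-bordered on the right for their common idempotent prefix $e'$ and suffix $f'$ --- to give $U_m=V_m$.

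The hard part will be the propagation lemma itself: isolating the exact hypotheses on $\alpha=\beta$ (a content condition together with agreement at one end) under which the prefixed and suffixed pseudoidentities present $\mathsf{K}\malcev\llbracket\alpha=\beta\rrbracket$ and $\mathsf{D}\malcev\llbracket\alpha=\beta\rrbracket$, and then proving it. The delicate direction is showing that every monoid satisfying the prefixed (resp.\ suffixed) pseudoidentity in fact lies in the Mal'cev product, which requires constructing an explicit relational morphism onto an $\mathsf{L}_m$- (resp.\ $\mathsf{R}_m$-) monoid whose idempotent-preimages lie in $\mathsf{K}$ (resp.\ $\mathsf{D}$). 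It is here that the nested self-similar structure of the $U_k$ and $V_k$ --- agreeing on a whole stack of idempotent prefixes and suffixes and differing only in the innermost block --- has to be used, both to verify the lemma's hypotheses at every level of the induction and to justify the normalisation of the presenting idempotents to the concrete forms $(U_kx_{k+1})^\omega$ and $(y_{k+1}U_k)^\omega$.
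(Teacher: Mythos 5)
The paper does not prove this theorem at all: it is cited from Kufleitner's 2013 paper and used as a black box. There is therefore no paper argument to match yours against, and what you are attempting is, in effect, a reproof of an entire external result rather than of a lemma of this paper.

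Taken on its own terms, your outline identifies a reasonable strategy (joint induction transported across $\mathsf{K}\malcev(-)$ and $\mathsf{D}\malcev(-)$), and the base case $\mathsf{J}=\llbracket U_1=V_1\rrbracket$ is handled correctly: the substitutions $y_1=t=1$ and $s=x_1=1$ do recover the defining $\omega$-terms of $\mathsf{R}$ and $\mathsf{L}$, and the collapse of $U_1$ and $V_1$ to $(sx_1)^\omega(y_1 t)^\omega$ in $\mathsf{J}$ is right. But beyond that there is a genuine gap, and you flag it yourself: the ``propagation lemma'' is neither stated precisely nor proved, and it is not a folklore fact you can wave at. The general Pin--Weil description of $\mathsf{V}\malcev\mathsf{W}$ by pseudoidentities gives an \emph{infinite} basis, quantified over all $\mathsf{W}$-pointlike substitutions; collapsing that basis to a \emph{single} $\omega$-term pseudoidentity with a presenting idempotent of the explicit form $(U_k x_{k+1})^\omega$ is precisely the non-trivial content of the theorem you are trying to prove, so leaving it as a lemma-to-be-filled-in is circular in effect. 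The claimed normalisation of the presenting idempotent to $(U_m x_{m+1})^\omega$ is likewise asserted rather than argued, and it is not obvious: one has to show that the a priori larger set of admissible idempotent prefixes yields no additional pseudoidentities.

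There is a second, smaller hole at the end. For $\mathsf{R}_{m+1}\cap\mathsf{L}_{m+1}\subseteq\llbracket U_m=V_m\rrbracket$ you set $x_{m+1}=1$ and $y_{m+1}=1$, which (after using aperiodicity of $\mathsf{DA}$) gives $U_m^\omega=U_m^\omega V_m$ and $U_m^\omega=V_m U_m^\omega$. You then say these ``combine, via the common idempotent prefix and suffix of $U_m$ and $V_m$,'' to give $U_m=V_m$. That inference is not automatic: $U_m^\omega$ is not the same idempotent as the common prefix $(U_{m-1}x_m)^\omega$ or suffix $(y_m U_{m-1})^\omega$ of $U_m$ and $V_m$, so no cancellation is available from the bordering alone. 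To make this step work you would need an explicit computation inside $\mathsf{DA}$ (or inside $\mathsf{R}_{m+1}\cap\mathsf{L}_{m+1}$) showing how the two one-sided absorptions propagate through the nested $\omega$-powers, and that computation is currently missing. As it stands the proposal is an honest roadmap with the two load-bearing steps left as placeholders, not a proof.
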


We further define the following terms $P_{m},Q_{m}$ in the variables $e,f,s,t, x_{1},\ldots,x_{m}$, $y_{1},\ldots$,$y_{m}$. For $m\geq 2$, we let
\begin{align*}
P_{1} &= (e^\omega sf^\omega x_{1})^{\omega} e^{\omega}sf^{\omega} (y_{1}e^{\omega}tf^{\omega})^{\omega}\\
Q_{1} &= (e^\omega sf^\omega x_{1})^{\omega} e^{\omega}tf^{\omega} (y_{1}e^{\omega}tf^{\omega})^{\omega}\\
P_{m} &= (P_{m-1}x_{m})^{\omega} P_{m-1} (y_{m}P_{m-1})^{\omega}\\
Q_{m} &= (P_{m-1}x_{m})^{\omega} Q_{m-1} (y_{m}P_{m-1})^{\omega}.
\end{align*}

\begin{thm}[\cite{Kufleitner:2013}]
	For all $m\geq 1$, the following equality holds: $$(\mathsf{R}_{m+1} \cap \mathsf{L}_{m+1})*\mathsf{D} = \llbracket P_{m} = Q_{m} \rrbracket.$$
\end{thm}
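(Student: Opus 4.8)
The plan is to prove the two inclusions separately, using throughout the observation that $P_m$ and $Q_m$ are obtained from $U_m$ and $V_m$ by the substitution $\sigma$ that fixes $x_1,\dots,x_m,y_1,\dots,y_m$ and sends $s$ to $e^\omega s f^\omega$ and $t$ to $e^\omega t f^\omega$. Writing $U_m=\alpha_m\,s\,\beta_m$ and $V_m=\alpha_m\,t\,\beta_m$ with $\alpha_1=(sx_1)^\omega$, $\beta_1=(y_1t)^\omega$, $\alpha_m=(U_{m-1}x_m)^\omega\alpha_{m-1}$ and $\beta_m=\beta_{m-1}(y_mU_{m-1})^\omega$, a straightforward induction yields $P_m=\sigma(\alpha_m)\,e^\omega s f^\omega\,\sigma(\beta_m)$ and $Q_m=\sigma(\alpha_m)\,e^\omega t f^\omega\,\sigma(\beta_m)$; in particular $P_m$ and $Q_m$ share the prefix $\sigma(\alpha_m)$ and the suffix $\sigma(\beta_m)$, each of which is a nonempty product of $\omega$-powers, and they differ only in the innermost factor $e^\omega s f^\omega$ versus $e^\omega t f^\omega$.

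For the inclusion $(\mathsf{R}_{m+1}\cap\mathsf{L}_{m+1})*\mathsf{D}\subseteq\llbracket P_m=Q_m\rrbracket$, since pseudoidentities pass to divisors it suffices to prove that $M*D\models P_m=Q_m$ for every $M\in\mathsf{R}_{m+1}\cap\mathsf{L}_{m+1}=\llbracket U_m=V_m\rrbracket$ and every $D\in\mathsf{D}$; fix a continuous homomorphism $\widehat{\varphi}\colon\widehat{\Sigma^+}\to M*D$ with $M$-component $\mu$ and $D$-component $\delta$. The $D$-components coincide at once: since $\mathsf{D}$ satisfies $yx^\omega=x^\omega$, an $\omega$-power is left absorbing in $D$, so using that $\sigma(\beta_m)$ is a product of $\omega$-powers one gets $\widehat{\delta}(P_m)=\widehat{\delta}(\sigma(\beta_m))=\widehat{\delta}(Q_m)$, which does not even involve the differing factor. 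For the $M$-components one uses the twisted-product formula: the $M$-component of $\widehat{\varphi}(w)$ is the product of the $\mu$-images of the letters of $w$, each twisted on the left by the $D$-image of the preceding prefix. The point of the blocks $e^\omega(\cdot)f^\omega$ is that, again by $yx^\omega=x^\omega$, the $D$-state immediately before any occurrence of $s$ or $t$ inside such a block equals $\delta(e)^\omega$, and the $D$-state leaving the block equals $\delta(f)^\omega$, in both cases independently of the surrounding context. Reorganising $P_m$ and $Q_m$ by shifting the block boundaries accordingly, one obtains a continuous homomorphism $\widehat{\psi}\colon\widehat{\Sigma^+}\to M$ and an element $g\in M$ such that the $M$-component of $\widehat{\varphi}(P_m)$ equals $g\,\widehat{\psi}(U_m)$ and the $M$-component of $\widehat{\varphi}(Q_m)$ equals $g\,\widehat{\psi}(V_m)$; then $M\models U_m=V_m$ gives $\widehat{\psi}(U_m)=\widehat{\psi}(V_m)$ and the $M$-components agree as well. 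The delicate point here is to choose the unit boundaries so that $\psi$ is genuinely a homomorphism and so that the leading $\delta(e)^\omega$-run factors out of both $P_m$ and $Q_m$ as one and the same left factor $g$; this is where it matters that $\alpha_m$ and $\beta_m$ are themselves built from $\omega$-powers.

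For the reverse inclusion $\llbracket P_m=Q_m\rrbracket\subseteq(\mathsf{R}_{m+1}\cap\mathsf{L}_{m+1})*\mathsf{D}$, I would invoke Tilson's Delay Theorem: a finite semigroup $S$ lies in $\mathsf{V}*\mathsf{D}$ if and only if, for some $k$, the derived category $\mathcal{C}_k(S)$ associated with the canonical relational morphism from $S$ onto the $k$-definite semigroup on a set of generators divides a monoid in $\mathsf{V}$; and for $\mathsf{V}=\llbracket U_m=V_m\rrbracket$ this amounts to $\mathcal{C}_k(S)$ satisfying the path pseudoidentity $U_m=V_m$, its variables interpreted as loops at a common object. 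Granting $S\models P_m=Q_m$ and taking $k$ at least the order of $S$, I would verify this categorical identity by a lifting argument: any coterminal pair of paths realising the patterns of $U_m$ and $V_m$ in $\mathcal{C}_k(S)$ lifts to a pair of words over $S$ realising the patterns of $P_m$ and $Q_m$, with $e$ and $f$ interpreted by idempotent loops at the source and target objects---such loops exist because $k$ is large, and they carry exactly the portion of the $k$-window that distinguishes those objects---so that $P_m=Q_m$ in $S$ forces the two paths to coincide; hence $\mathcal{C}_k(S)\models U_m=V_m$ and $S\in(\mathsf{R}_{m+1}\cap\mathsf{L}_{m+1})*\mathsf{D}$. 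The same lifting correspondence would also re-prove the first inclusion, so it could serve as a single unified framework.

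I expect the main obstacle to be exactly this translation between pseudoidentities of $S$ and pseudoidentities of the derived category: pinning down the appropriate $k$, producing at every object an idempotent loop that faithfully records that object, and checking that the localisations $e^\omega(\cdot)f^\omega$ around $s$ and $t$ encode just enough of the surrounding $k$-window to realise an arbitrary coterminal context. This is the categorical content of the generalisation of Knast's identity embodied in $P_m=Q_m$, and carrying it out compatibly with the recursion defining $U_m$ and $V_m$---and with the base case $m=1$, where one recovers Knast's identity for $\mathsf{J}*\mathsf{D}$---is where the real work lies.
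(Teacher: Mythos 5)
This theorem is cited by the paper from Kufleitner~\cite{Kufleitner:2013} and is not re-proved here, so I am assessing your proposal on its own terms rather than against an in-paper argument. Your structural observation that $P_m=\sigma(U_m)$ and $Q_m=\sigma(V_m)$ under the substitution $s\mapsto e^\omega s f^\omega$, $t\mapsto e^\omega t f^\omega$ is correct and is indeed the right organizing principle, and your treatment of the easy inclusion $(\mathsf{R}_{m+1}\cap\mathsf{L}_{m+1})*\mathsf{D}\subseteq\llbracket P_m=Q_m\rrbracket$ via the twisted-product formula in a semidirect product $M*D$ is standard wreath-product-principle material; the delicate bookkeeping you flag (anchoring a common left factor $g$ and verifying that the letters of $U_m$ acquire a well-defined $\mathsf{D}$-context from the $e^\omega(\cdot)f^\omega$ blocks and the periodic structure of $\sigma(\alpha_m)$, $\sigma(\beta_m)$) is genuine but is the kind of detail that can be carried out.

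The second inclusion is where the proposal breaks down. You invoke the Delay Theorem and then assert that, for $\mathsf{V}=\llbracket U_m=V_m\rrbracket$, the condition ``$\mathcal{C}_k(S)$ divides a monoid in $\mathsf{V}$'' \emph{amounts to} $\mathcal{C}_k(S)$ satisfying $U_m=V_m$ with all variables interpreted as loops at a single object. That condition is membership in $\ell\mathsf{V}$, not in $g\mathsf{V}$; dividing a $\mathsf{V}$-monoid is the definition of $g\mathsf{V}$, and $g\mathsf{V}=\ell\mathsf{V}$ holds precisely when $\mathsf{V}$ is local. Since the entire point of this paper is that $\mathsf{R}_{m+1}\cap\mathsf{L}_{m+1}$ is \emph{not} local, the equivalence you are leaning on fails for exactly the pseudovarieties under consideration: verifying $U_m=V_m$ on loops at each object of $\mathcal{C}_k(S)$ is necessary but not sufficient to conclude $\mathcal{C}_k(S)\in g\mathsf{V}$, and hence does not yield $S\in\mathsf{V}*\mathsf{D}$. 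What is actually needed is a genuine basis of path pseudoidentities over multi-object graphs characterizing $g(\mathsf{R}_{m+1}\cap\mathsf{L}_{m+1})$ (or, as in Kufleitner's and Straubing's work, a direct analysis of the derived/delay construction tailored to the recursive structure of $U_m$, $V_m$), together with the argument that $S\models P_m=Q_m$ forces those richer identities. Your lifting idea---interpreting $e$ and $f$ by idempotent loops recording the endpoint objects---is a sensible ingredient of such an argument, but as written it only reproves the loop condition and therefore leaves the crucial step unjustified. I would recommend either locating the correct graph pseudoidentity basis or abandoning the categorical route in favor of the explicit construction of an $\mathsf{R}_{m+1}\cap\mathsf{L}_{m+1}$-cover as in Straubing's treatment of $\mathsf{V}*\mathsf{D}$.
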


\section{The pseudovariety $ \mathsf{R}_{m} \cap \mathsf{L}_{m}$ is not local}
Our main result is the inequality $ (\mathsf{R}_{m} \cap \mathsf{L}_{m})*\mathsf{D} \neq L (\mathsf{R}_{m} \cap \mathsf{L}_{m})$, which we establish by giving an example of a language whose syntactic semigroup is in $L (\mathsf{R}_{m} \cap \mathsf{L}_{m}) \setminus (\mathsf{R}_{m} \cap \mathsf{L}_{m})*\mathsf{D}$.
Since, $\mathsf{R}_{1} \cap \mathsf{L}_{1} = \mathsf{R}_{2} \cap \mathsf{L}_{2} = \mathsf{J}$, the cases $m=1,2$ are given by the following well-known theorem. 

\begin{thm}[Knast \cite{Knast:1983a}]
	The pseudovariety $\mathsf{J}$ is not local.
\end{thm}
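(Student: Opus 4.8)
The plan is to exhibit an explicit finite semigroup $S$ --- realized as the syntactic semigroup of a concrete regular language $L$ over a small alphabet --- that lies in $L\mathsf{J}$ but not in $\mathsf{J}*\mathsf{D}$. I would exploit the two descriptions already available to us. On the one hand, since $\mathsf{J}=\mathsf{R}_{2}\cap\mathsf{L}_{2}$, the previous theorem gives $\mathsf{J}*\mathsf{D}=\llbracket P_{1}=Q_{1}\rrbracket$, so it suffices to evaluate $P_{1}$ and $Q_{1}$ at suitable elements of $S$ and check that they differ. On the other hand, $S\in L\mathsf{J}$ means exactly that every local monoid $eSe$, with $e$ an idempotent of $S$, satisfies the defining pseudoidentities $x^{\omega+1}=x^{\omega}$ and $(xy)^{\omega}=(yx)^{\omega}$ of $\mathsf{J}$. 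The conceptual point driving the argument is that $P_{1}$ and $Q_{1}$ involve \emph{two independent} idempotents $e$ and $f$, occurring in the pattern $e^{\omega}\cdots f^{\omega}\cdots e^{\omega}\cdots f^{\omega}$, whereas membership in $L\mathsf{J}$ constrains each local monoid only \emph{separately}; a well-chosen $S$ with two ``genuinely distinct'' idempotents can therefore land on opposite sides of the two conditions.

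The first and main step is thus the construction of $L$ (equivalently, of $S$). Knast's language is an explicit regular language over a small alphabet whose membership condition is of ``long-range matching'' type: it records a correspondence between distant positions of a word that no bounded-size sliding window can detect. Given such a language, the verification splits into two finite checks. First, $S\in L\mathsf{J}$: one enumerates the idempotents of $S$ and, for each idempotent $e$, verifies directly that $eSe$ satisfies $x^{\omega+1}=x^{\omega}$ and $(xy)^{\omega}=(yx)^{\omega}$; here it helps to have $S$ presented concretely (as a transformation semigroup, or by generators and relations) so that the local monoids are small and the computation is routine. Second, $S\notin\mathsf{J}*\mathsf{D}$: one exhibits idempotents $e\neq f$ and elements $s,t,x_{1},y_{1}$ of $S$ at which $P_{1}$ and $Q_{1}$ take distinct values, which is again a finite evaluation once $S$ is known.

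An alternative to the second check --- indeed the route of Knast's original argument, which predates the $P_{1}=Q_{1}$ description --- goes through Tilson's categorical machinery: $\mathsf{J}$ is non-local precisely when $\ell\mathsf{J}\neq g\mathsf{J}$ in the lattice of pseudovarieties of categories, that is, when there exists a finite category all of whose local monoids are $\mathcal{J}$-trivial that nonetheless does not divide any finite $\mathcal{J}$-trivial monoid; the Delay Theorem then lets one transport such a category back to a semigroup $S$, and a language $L$, witnessing $S\in L\mathsf{J}\setminus(\mathsf{J}*\mathsf{D})$. Equivalently, by Straubing's wreath-product principle, one shows that for every delay bound $k$ there are two words on opposite sides of $L$ that no recognizer built from a $\mathcal{J}$-trivial monoid with delay $k$ can separate. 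I expect essentially all the difficulty to be in the construction: one must reconcile two competing requirements --- $L$ should be ``locally piecewise testable'' (to force $S\in L\mathsf{J}$) yet must encode a constraint invisible to any device of bounded delay (to force $S\notin\mathsf{J}*\mathsf{D}$) --- and, in the Delay-Theorem formulation, the pumping argument defeating every fixed $k$ must be made uniform in $k$. Once a correct $L$ is in hand, the remaining steps are bookkeeping.
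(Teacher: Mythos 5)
Your proposal lays out the correct high-level strategy and actually mirrors how the paper treats this statement: both reduce non-locality of $\mathsf{J}$ to exhibiting a specific regular language $L$ with $\operatorname{Synt}(L)\in L\mathsf{J}\setminus(\mathsf{J}*\mathsf{D})$, and both recognize that $\mathsf{J}*\mathsf{D}=\llbracket P_{1}=Q_{1}\rrbracket$ (since $\mathsf{J}=\mathsf{R}_{2}\cap\mathsf{L}_{2}$) gives a concrete pseudoidentity to refute. Your observation that $P_{1},Q_{1}$ interleave two idempotents $e,f$ while $L\mathsf{J}$ constrains each local monoid separately is exactly the right intuition, and the alternative route through Tilson's Delay Theorem is a faithful description of Knast's original argument.

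The genuine gap is that you never produce the language. You describe its required properties (``long-range matching,'' locally piecewise testable, defeats any bounded-delay recognizer) but, as you yourself note, ``essentially all the difficulty'' is in the construction --- and that step is simply absent. The paper supplies it: over $A_{2}=\{a,b,c,d\}$, Knast's language is
\[
\ell_{2}=\bigl(ab^{+}\cup ac^{+}\bigr)^{*}\,a\,b^{+}\,d\,\bigl(b^{+}d\cup c^{+}d\bigr)^{*},
\]
and this is the base case of the paper's whole inductive construction. With $\ell_{2}$ in hand the two checks you outline do go through --- indeed the paper's Proposition~\ref{easy-half} re-proves $\operatorname{Synt}(\ell_{2})\notin\mathsf{J}*\mathsf{D}$ for its $m=2$ case by mapping $e\mapsto b$, $f\mapsto c$, $s,x_{1}\mapsto a$, $t,y_{1}\mapsto d$ and showing $a\hat\phi_{2}(P_{1})d\notin\overline{\ell_{2}}$ while $a\hat\phi_{2}(Q_{1})d\in\overline{\ell_{2}}$, and the paper defers $\operatorname{Synt}(\ell_{2})\in L\mathsf{J}$ to Knast, just as you would. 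But without the explicit $\ell_{2}$ (or an equivalent concrete $S$), the proposal is a plan rather than a proof: there is nothing on which to run the ``finite checks'' you describe, and the competing constraints you correctly identify are exactly what make it nontrivial to find such an $L$.
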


To show that $ \mathsf{J}*\mathsf{D} \neq L \mathsf{J}$, Knast gave an example of a language whose syntactic semigroup is in $L \mathsf{J} \setminus \mathsf{J}*\mathsf{D}$.
This language is defined as follows: let $A_{2}=\lbrace a,b,c,d \rbrace $ be the alphabet and $\ell_{2}=\bigl(ab^{+} \cup ac^{+}\bigr)^{*}ab^{+}d\bigl(b^{+}d\cup c^{+}d\bigr)^{*}$ be a language of $A_{2}^{+}$.
We then have that $\operatorname{Synt}(\ell_{2})\in L \mathsf{J} \setminus \mathsf{J}*\mathsf{D}$.

We define the alphabets $A_{m}$ and languages $\ell_{m}$ over $A_{m}^{+}$, for $m> 2$, by
\begin{align*}
	A_{m}&=A_{m-1}\cup\lbrace \bar{x}_{m}, \bar{y}_{m}  \rbrace , \\
	\ell_{m}&=\bigl(A_{m-1}\cup\lbrace \bar{x}_{m} \rbrace \bigr)^{*} \bar{x}_{m} \ell_{m-1} \bar{y}_{m} \bigl(A_{m-1}\cup \lbrace  \bar{y}_{m} \rbrace\bigr)^{*},
\end{align*}
where $\bar{x}_{m}$ and $\bar{y}_{m}$ are distinct letters not in $A_{m-1}$.

\begin{prop}
	For all $m\geq 2$, the semigroup $\operatorname{Synt}(\ell_{m}) $ does not belong to $(\mathsf{R}_{m} \cap \mathsf{L}_{m}) *\mathsf{D}$. In other words, $\operatorname{Synt}(\ell_{m}) $ does not satisfy the pseudoidentity $P_{m-1}=Q_{m-1}$. \label{easy-half}
\end{prop}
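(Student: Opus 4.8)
The plan is to prove, by induction on $m\geq 2$, a statement slightly stronger than the Proposition which carries along an explicit distinguishing context: \emph{for every $m\geq 2$ there are words $\alpha_m,\beta_m\in A_m^*$ and a substitution $\sigma_m$ of the variables $e,f,s,t,x_1,\dots,x_{m-1},y_1,\dots,y_{m-1}$ by words of $A_m^+$ such that, after replacing every $\omega$ by one fixed large factorial power $\Theta$ (large enough that $\Theta$-powers compute $\omega$-powers in $\operatorname{Synt}(\ell_m)$), exactly one of $\alpha_m\,\sigma_m(P_{m-1})\,\beta_m$ and $\alpha_m\,\sigma_m(Q_{m-1})\,\beta_m$ lies in $\ell_m$.} Since the syntactic morphism $\varphi\colon A_m^+\to\operatorname{Synt}(\ell_m)$ separates $\sigma_m(P_{m-1})$ and $\sigma_m(Q_{m-1})$ exactly when such a context exists, this gives $\hat\varphi(\sigma_m(P_{m-1}))\neq\hat\varphi(\sigma_m(Q_{m-1}))$, i.e.\ $\operatorname{Synt}(\ell_m)$ does not satisfy $P_{m-1}=Q_{m-1}$. (Equivalently one may phrase the whole argument with pseudowords and the closure $\overline{\ell_m}$, using that $\widesim_{\overline{\ell_m}}$ is the clopen extension of $\widesim_{\ell_m}$; replacing $\omega$ by a fixed power just keeps everything inside $A_m^+$.)

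The base case $m=2$ is precisely Knast's theorem: $\mathsf R_1\cap\mathsf L_1=\mathsf J$, and by Theorem~\ref{Kufleitner,UV}'s companion result $(\mathsf R_2\cap\mathsf L_2)*\mathsf D=\mathsf J*\mathsf D=\llbracket P_1=Q_1\rrbracket$, so the fact that $\operatorname{Synt}(\ell_2)\notin\mathsf J*\mathsf D$ provides a substitution $\sigma_2$ with $\hat\varphi(\sigma_2(P_1))\neq\hat\varphi(\sigma_2(Q_1))$; these two distinct elements of $\operatorname{Synt}(\ell_2)=A_2^+/{\widesim_{\ell_2}}$ are by definition separated by some context $(\alpha_2,\beta_2)\in A_2^*\times A_2^*$, which gives the required data.

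For the inductive step ($m\geq 3$) I would first record the following membership description, immediate from $\ell_m=(A_{m-1}\cup\{\bar x_m\})^*\bar x_m\,\ell_{m-1}\,\bar y_m\,(A_{m-1}\cup\{\bar y_m\})^*$ together with the fact that $\ell_{m-1}$ is a language over $A_{m-1}=A_m\setminus\{\bar x_m,\bar y_m\}$: a word $w\in A_m^+$ lies in $\ell_m$ iff $\bar x_m$ and $\bar y_m$ both occur in $w$, the last $\bar x_m$ precedes the first $\bar y_m$, and the factor of $w$ strictly between these occurrences belongs to $\ell_{m-1}$ (indeed, since neither $\bar y_m$ nor $\bar x_m$ occurs in $\ell_{m-1}$, the ``splitting'' marked letters of any $\ell_m$-factorization are forced to be the last $\bar x_m$ and the first $\bar y_m$). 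Now, given $\sigma_{m-1},\alpha_{m-1},\beta_{m-1}$ at level $m-1$, put $\sigma_m$ equal to $\sigma_{m-1}$ on $e,f,s,t,x_1,\dots,x_{m-2},y_1,\dots,y_{m-2}$, set $\sigma_m(x_{m-1})=\bar x_m\,\alpha_{m-1}$ and $\sigma_m(y_{m-1})=\beta_{m-1}\,\bar y_m$, and take $\alpha_m=\beta_m=\varepsilon$. Writing $R=\sigma_{m-1}(P_{m-2})$ and $R'=\sigma_{m-1}(Q_{m-2})$ (words over $A_{m-1}$), the recursion $P_{m-1}=(P_{m-2}x_{m-1})^\omega P_{m-2}(y_{m-1}P_{m-2})^\omega$ and $Q_{m-1}=(P_{m-2}x_{m-1})^\omega Q_{m-2}(y_{m-1}P_{m-2})^\omega$ yield
\[
\sigma_m(P_{m-1})=(R\,\bar x_m\,\alpha_{m-1})^\Theta\,R\,(\beta_{m-1}\,\bar y_m\,R)^\Theta,\qquad
\sigma_m(Q_{m-1})=(R\,\bar x_m\,\alpha_{m-1})^\Theta\,R'\,(\beta_{m-1}\,\bar y_m\,R)^\Theta .
\]
Since none of $R,R',\alpha_{m-1},\beta_{m-1}$ contains $\bar x_m$ or $\bar y_m$, every $\bar x_m$ above occurs inside the prefix block $(R\bar x_m\alpha_{m-1})^\Theta$ and every $\bar y_m$ inside the suffix block $(\beta_{m-1}\bar y_m R)^\Theta$; hence the factor between the last $\bar x_m$ and the first $\bar y_m$ equals $\alpha_{m-1}\,R\,\beta_{m-1}$ in the first word and $\alpha_{m-1}\,R'\,\beta_{m-1}$ in the second. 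By the membership description, $\sigma_m(P_{m-1})\in\ell_m$ iff $\alpha_{m-1}\,\sigma_{m-1}(P_{m-2})\,\beta_{m-1}\in\ell_{m-1}$, and $\sigma_m(Q_{m-1})\in\ell_m$ iff $\alpha_{m-1}\,\sigma_{m-1}(Q_{m-2})\,\beta_{m-1}\in\ell_{m-1}$. By the induction hypothesis exactly one of the two right-hand conditions holds, so exactly one of $\sigma_m(P_{m-1}),\sigma_m(Q_{m-1})$ lies in $\ell_m$, which closes the induction.

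The only step that needs real care—the main obstacle—is the membership description and, in the closure formulation, the fact that it survives passing to $\overline{\ell_m}$: one must verify that the marked letters $\bar x_m,\bar y_m$ are located unambiguously enough (as the last $\bar x_m$ and the first $\bar y_m$) that the factorization of an $\overline{\ell_m}$-element into ``prefix part / $\bar x_m$ / middle / $\bar y_m$ / suffix part'' is still unique and that the prefix and suffix blocks $(R\bar x_m\alpha_{m-1})^\Theta$, $(\beta_{m-1}\bar y_m R)^\Theta$ contribute only to the $(A_{m-1}\cup\{\bar x_m\})^*$- and $(A_{m-1}\cup\{\bar y_m\})^*$-factors; replacing $\omega$ by a fixed power, as above, is the most economical way to sidestep the closure bookkeeping. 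Everything else is routine manipulation with the syntactic morphism.
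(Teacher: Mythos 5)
Your proposal is correct and follows essentially the same plan as the paper: an induction on $m$ with a recursively defined substitution in which, at each step, $x_{m-1}\mapsto \bar x_m\cdot(\text{left context})$ and $y_{m-1}\mapsto(\text{right context})\cdot\bar y_m$, so that the new marked letters isolate a middle factor whose membership in $\ell_{m-1}$ is controlled by the induction hypothesis. In fact, unwinding your inductive step with the paper's explicit $\alpha_2=a$, $\beta_2=d$ recovers precisely the paper's $\phi_m$: your carrying of a distinguishing context $(\alpha_m,\beta_m)$ through the induction is just a way to absorb the paper's $m=2$ versus $m\geq3$ case split into one uniform step.

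Two presentational differences are worth noting, both in your favour on clarity. First, you keep the base case abstract (citing Knast plus $\mathsf J*\mathsf D=\llbracket P_1=Q_1\rrbracket$ to obtain \emph{some} $\sigma_2$ and some separating context), whereas the paper exhibits $\phi_2$ and the context $(a,d)$ explicitly; both are fine, though a referee may prefer the explicit version since it makes the whole construction concrete. Second, by replacing $\omega$ with a large factorial power $\Theta$ you stay inside $A_m^+$ and can use the elementary combinatorial description of $\ell_m$ (last $\bar x_m$, first $\bar y_m$, middle factor in $\ell_{m-1}$), avoiding the paper's equidivisibility-based reasoning about factors of pseudowords in $\overline{\ell_m}$. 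This is a genuine simplification, but you should make the $\Theta$-bookkeeping explicit: the exponent must be large enough for $\operatorname{Synt}(\ell_m)$, and when invoking the induction hypothesis you apply it to the \emph{same} $\Theta$ for the nested $\sigma_{m-1}(P_{m-2})$, which is legitimate because $\sigma_{m-1}(P_{m-2}^{(\Theta)})$ and $\sigma_{m-1}(P_{m-2}^{(\Theta')})$ have the same $\operatorname{Synt}(\ell_{m-1})$-image for any two sufficiently large factorials $\Theta,\Theta'$ (equivalently, state the inductive invariant for the pseudowords $\hat\sigma_{m-1}(P_{m-2})$, $\hat\sigma_{m-1}(Q_{m-2})$ and their membership in $\overline{\ell_{m-1}}$, and only pass to a concrete $\Theta$ when applying the membership description). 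With that point spelled out the argument is complete.
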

\begin{proof}
	Define $\Sigma_{1}=\lbrace e,f,s,t,x_{1},y_{1}  \rbrace $ and $\Sigma_{m}=\Sigma_{m-1} \cup \lbrace x_{m}, y_{m} \rbrace  $, where $x_{m}$ and $y_m$ are two distinct letters not in $\Sigma_{m-1}$.
	For each $m\geq2$, we show that there is a continuous homomorphism $\hat\phi_{m}:\widehat{\Sigma_{m-1}^{+}}arrow \widehat{A_{m}^{+}}$, such that $\hat\phi_{m}\bigl(P_{m-1}\bigr) \widensim_{\overline{\ell_{m}}}\hat\phi_{m}\bigl(Q_{m-1}\bigr) $.
	We define $\phi_{2}$ on $\Sigma_{1}$ by letting
	\begin{alignat*}{6}
		 \phi_{2}(e) &=b &\qquad & &  \phi_{2}(s) &= \phi_{2}(x_{1}) =a \\
		  \phi_{2}(f) &=c && & \phi_{2}l(t) &= \phi_{2}(y_{1}) =d.
	\end{alignat*}
	Then we have
	\begin{align*}
		& a\hat\phi_{2}(P_{1}) d=a\bigl(b^{\omega}ac^{\omega}a\bigr) ^{\omega}b^\omega ac^\omega \bigl(db^{\omega}dc^{\omega}\bigr) ^{\omega}d \\
		& a\hat\phi_{2}(Q_{1}) d=a\Bigl(b^{\omega}ac^{\omega}a\Bigr) ^{\omega}b^\omega dc^\omega \bigl(db^{\omega}dc^{\omega}\bigr) ^{\omega}d. 
	\end{align*}
	Since, by equidivisibility of $\widehat{A_2^+}$, for every $\lambda$ in $\overline{b^{+}}$ the pseudoword  $a\lambda d$ is not a factor of $a\hat\phi_{2}(P_{1}) d$, we deduce that $a\hat\phi_{2}(P_{1}) d$ is not an element of $$ \overline{\ell_{2}}= \overline{\bigl(ab^{+} \cup ac^{+}\bigr) ^{*}}\,a\,\overline{b^{+}}\,d \,\overline{\bigl(b^{+}d\cup c^{+}d\bigr) ^{*}}.$$
	Also, we note that
	\begin{align*}
		a\hat\phi_{2}(Q_{1}) d &= \bigl(ab^{\omega}ac^{\omega}\bigr) ^{\omega}ab^\omega dc^\omega d\bigl(b^{\omega}dc^{\omega}d\bigr) ^{\omega} ,
	\end{align*}
	where $\bigl(ab^{\omega}ac^{\omega}\bigr) ^{\omega}$ is an element of $\overline{\bigl(ab^{+} \cup ac^{+}\bigr) ^{*}}$ and $c^\omega d\bigl(b^{\omega}dc^{\omega}d\bigr) ^{\omega}$ is an element of $\overline{\bigl(b^{+}d\cup c^{+}d\bigr) ^{*}}$, so we can say that $a\hat\phi_{2}(Q_{1}) d$ is a pseudoword of $\overline{\ell_{2}}$.
	Since we have $a\hat\phi_{2}(P_{1}) d\notin \overline{\ell_{2}}$ and $a\hat\phi_{2}(Q_{1}) d\in \overline{\ell_{2}}$, we infer that $\hat\phi_{2}(P_{1}) $ and $\hat\phi_{2}(Q_{1}) $ are not syntactically congruent with respect to the set $\overline{\ell_{2}}$.
	
	We now define $\phi_{3}$ on $\Sigma_{2}$ by
	\begin{align*}
		 \phi_{3}|_{\Sigma_{1}} &=\phi_{2} \\
		 \phi_{3}(x_{2})  &=\bar{x}_{3}a \\
		 \phi_{3}(y_{2})  &=d\bar{y}_{3}.
	\end{align*}
	It follows that
	\begin{align*}
		&\hat\phi_{3}(P_{2}) =\Bigl(\hat\phi_{2}(P_{1}) \bar{x}_{3}a\Bigr) ^{\omega}\hat\phi_{2}(P_{1}) \Bigl(d\bar{y}_{3}\hat\phi_{2}(P_{1}) \Bigr) ^{\omega} , \\
		&\hat\phi_{3}(Q_{2}) =\Bigl(\hat\phi_{2}(P_{1}) \bar{x}_{3}a\Bigr) ^{\omega}\hat\phi_{2}(Q_{1}) \Bigl(d\bar{y}_{3}\hat\phi_{2}(P_{1}) \Bigr) ^{\omega}.
	\end{align*}
	Since, for every $\lambda$ in $\overline{\ell_{2}}$, $\bar{x}_{3}\lambda \bar{y}_{3}$ is not a factor of the pseudoword $\hat\phi_{3}(P_{2}) $, we deduce that $\hat\phi_{3}(P_{2}) $ is not an element of $\overline{\ell_{3}}$.
	Also, we note that 
	\begin{align*}
		\hat\phi_{3}(Q_{2}) &= \Bigl(\hat\phi_{2}(P_{1}) \bar{x}_{3}a\Bigr) ^{\omega-1}\hat\phi_{2}(P_{1}) \bar{x}_{3}a\hat\phi_{2}(Q_{1}) d\bar{y}_{3}\hat\phi_{2}(P_{1}) \Bigl(d\bar{y}_{3}\hat\phi_{2}(P_{1}) \Bigr) ^{\omega-1},
	\end{align*}
	where $\Bigl(\hat\phi_{2}(P_{1}) \bar{x}_{3}a\Bigr) ^{\omega-1}\hat\phi_{2}(P_{1}) $ is an element of $\overline{\bigl(A_{2}\cup\lbrace \bar{x}_{3} \rbrace\bigr) ^{*}} $, $\hat\phi_{2}(P_{1}) \Bigl(d\bar{y}_{3}\hat\phi_{2}(P_{1}) \Bigr) ^{\omega-1}$ is an element of $\overline{\bigl(A_{2}\cup\lbrace \bar{y}_{3} \rbrace\bigr) ^{*}}$, and $a\hat\phi_{2}(Q_{1}) d$ is an element of $\overline{\ell_{2}}$, so we can say that $\hat\phi_{3}\bigl(Q_{2}\bigr) $ is a word of $\overline{\ell_{3}}$.
	Since $\hat\phi_{3}(P_{2})  \notin \overline{\ell_{3}}$ and $\hat\phi_{3}(Q_{2})  \in \overline{\ell_{3}}$, it follows that $\hat\phi_{3}(P_{2}) $ and $\hat\phi_{3}(Q_{2}) $ are not syntactically congruent with respect to the set $\overline{\ell_{3}}$.
	
	For $m\geq4$, we now define $\phi_{m}$ on $\Sigma_{m-1}$ by
	\begin{align*}
		 \phi_{m}|_{\Sigma_{m-2}}&=\phi_{m-1} \\
		 \phi_{m}(x_{m-1}) &=\bar{x}_{m}\\ 
		 \phi_{m}(y_{m-1}) &=\bar{y}_{m}.
	\end{align*}
	Suppose, inductively, that $ \hat\phi_{m}(P_{m-2}) =\hat\phi_{m-1}(P_{m-2})  $ is not a pseudoword of $ \overline{\ell_{m-1}}$ and $\hat\phi_{m}(Q_{m-2}) =\hat\phi_{m-1}(Q_{m-2})  $ is an element of  $\overline{\ell_{m-1}}$.
	Note that
	\begin{align*}
		&\hat\phi_{m}(P_{m-1}) =\Bigl(\hat\phi_{m-1}(P_{m-2}) \bar{x}_{m}\Bigr) ^{\omega}\hat\phi_{m-1}(P_{m-2}) \Bigl(\bar{y}_{m}\hat\phi_{m-1}(P_{m-2}) \Bigr) ^{\omega}, \\
		&\hat\phi_{m}(Q_{m-1}) =\Bigl(\hat\phi_{m-1}(P_{m-2}) \bar{x}_{m}\Bigr) ^{\omega}\hat\phi_{m-1}(Q_{m-2}) \Bigl(\bar{y}_{m}\hat\phi_{m-1}(P_{m-2}) \Bigr) ^{\omega}.
	\end{align*}
	Since, for every $\lambda$ in $\overline{\ell_{m-1}}$, $\bar{x}_{m}\lambda \bar{y}_{m}$ is not a factor of the pseudoword $\hat\phi_{m}(P_{m-1}) $, we deduce that $\hat\phi_{m}(P_{m-1}) $ is not pseudoword of $\overline{\ell_{m}}$. Also, we note that 
	\begin{align*}
		\hat\phi_{m}(Q_{m-1})  &= \Bigl(\hat\phi_{m-1}(P_{m-2}) \bar{x}_{m}\Bigr) ^{\omega-1}\hat\phi_{m-1}(P_{m-2}) \cdot\bar{x}_{m}\hat\phi_{m-1}(Q_{m-2}) \bar{y}_{m} \cdot \\
		& \qquad \cdot \hat\phi_{m-1}(P_{m-2}) \Bigl(\bar{y}_{m}\hat\phi_{m-1}(P_{m-2}) \Bigr) ^{\omega-1},
	\end{align*}
	where the pseudoword $\Bigl(\hat\phi_{m-1}(P_{m-2}) \bar{x}_{m}\Bigr) ^{\omega-1}\hat\phi_{m-1}(P_{m-2}) $ belongs to $\overline{\bigl(A_{m-1}\cup\lbrace \bar{x}_{m} \rbrace\bigr) ^{*}}$, $\hat\phi_{m-1}(Q_{m-2}) $ is an element of $\overline{\ell_{m-1}}$, and $\hat\phi_{m-1}(P_{m-2}) \Bigl(\bar{y}_{m}\hat\phi_{m-1}(P_{m-2}) \Bigr) ^{\omega-1} $ is an element of $\overline{\bigl(A_{m-1}\cup\lbrace \bar{y}_{m} \rbrace\bigr) ^{*}}$, so that $\hat\phi_{m}(Q_{m-1}) $ is a pseudoword of $\overline{\ell_{m}}$.
	Since $\hat\phi_{m}(P_{m-1}) $ is not an element $\overline{\ell_{m}}$ and $\hat\phi_{m}(Q_{m-1}) $ is in $\overline{\ell_{m}}$ we deduce that $\hat\phi_{m}(P_{m-1}) $ and $\hat\phi_{m}(Q_{m-1}) $ are not syntactically congruent with respect to the set $\overline{\ell_{m}}$.
	
	As a consequence of what we have just shown, we conclude that there exists a homomorphism  $\psi_{m}:\Sigma_{m}^{+}\rightarrow \operatorname{Synt}\bigl(\ell_{m}\bigr) $ such that $\hat\psi_{m}(P_{m-1}) \neq \hat\psi_{m}(Q_{m-1}) $. That is, the semigroup $\operatorname{Synt}(\ell_{m}) $ does not satisfy the pseudoidentity $P_{m-1}=Q_{m-1}$. We have thus established that $\operatorname{Synt}(\ell_{m}) \notin (\mathsf{R}_{m} \cap \mathsf{L}_{m}) *\mathsf{D}$.
\end{proof}

\begin{prop}
	For all $m\geq 2$, the semigroup $\operatorname{Synt}(\ell_{m})$ belongs to the pseudovariety $L(\mathsf{R}_{m} \cap \mathsf{L}_{m})$. That is, all local monoids of\/ $\operatorname{Synt}(\ell_{m})$ satisfy the pseudoidentity $ U_{m-1}=V_{m-1} $. \label{hard-half}
\end{prop}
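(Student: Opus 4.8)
The plan is to establish the equivalent statement that every local monoid $eSe$ of $S:=\operatorname{Synt}(\ell_m)$ satisfies the pseudoidentity $U_{m-1}=V_{m-1}$, using the description $\mathsf R_m\cap\mathsf L_m=\llbracket U_{m-1}=V_{m-1}\rrbracket$ of Theorem~\ref{Kufleitner,UV}. Writing $\hat\eta\colon\widehat{A_m^+}\to S$ for the continuous extension of the syntactic morphism (its kernel being $\widesim_{\overline{\ell_m}}$), every idempotent $e$ admits an idempotent pseudoword $\epsilon$ with $\hat\eta(\epsilon)=e$, and any substitution of the variables of $U_{m-1}$ into $eSe$ lifts, after sandwiching each chosen preimage between copies of $\epsilon$, to a continuous homomorphism $\hat\phi$ into $\widehat{A_m^+}$ realising it with $\hat\phi(z)\in\epsilon\,\widehat{A_m^+}\,\epsilon$ for every variable $z$. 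Thus the proposition reduces to showing, for every idempotent $\epsilon\in\widehat{A_m^+}$ and every continuous $\hat\phi$ from the free profinite semigroup on the variables of $U_{m-1}$ into $\epsilon\,\widehat{A_m^+}\,\epsilon$, that $\hat\phi(U_{m-1})\widesim_{\overline{\ell_m}}\hat\phi(V_{m-1})$.

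I would prove this by an induction on $k$ running through $2,\dots,m$, with the ambient alphabet $A_m$ kept fixed: the statement at level $k$ being that for every idempotent $\epsilon\in\widehat{A_m^+}$ and every continuous $\hat\phi$ from the free profinite semigroup on the $2(k-1)$ variables of $U_{k-1}$ into $\epsilon\,\widehat{A_m^+}\,\epsilon$ one has $\hat\phi(U_{k-1})\widesim_{\overline{\ell_k}}\hat\phi(V_{k-1})$, the case $k=m$ being what is wanted. Two structural facts are used repeatedly. First, since $U_j$ and $V_j$ involve the same variables, $c(\hat\phi(U_j))=c(\hat\phi(V_j))$; and for $j\ge 2$, putting $W=\hat\phi(U_{j-1})$, $W'=\hat\phi(V_{j-1})$, $\xi=\hat\phi(x_j)$, $\upsilon=\hat\phi(y_j)$, we have $\hat\phi(U_j)=PWQ$ and $\hat\phi(V_j)=PW'Q$ with the idempotents $P=(W\xi)^\omega$, $Q=(\upsilon W)^\omega$, and $W$ is a factor of each of $P$ and $Q$. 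Second, closure distributes over the finite unions and products defining $\ell_k$ (using compactness of $\widehat{A_m^+}$, and that the closure of the free subsemigroup on a sub-alphabet $B$ is the set of pseudowords with content in $B$), so for $k\ge 3$ a pseudoword $w$ lies in $\overline{\ell_k}$ exactly when it factors as $w=u_1\bar x_k u_2\bar y_k u_3$ with $c(u_1)\subseteq A_{k-1}\cup\{\bar x_k\}$, with $c(u_2)\subseteq A_{k-1}$ and $u_2\in\overline{\ell_{k-1}}$, and with $c(u_3)\subseteq A_{k-1}\cup\{\bar y_k\}$; such factorizations are manipulated using equidivisibility of $\widehat{A_m^+}$.

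For the base case $k=2$: if $c(\hat\phi(U_1))\not\subseteq A_2$ then both $\hat\phi(U_1)$ and $\hat\phi(V_1)$ lie, in every context, outside $\overline{\ell_2}\subseteq\widehat{A_2^+}$ and the equivalence is trivial; otherwise $\hat\phi$ factors through $\widehat{A_2^+}$ into a genuine local monoid of $\widehat{A_2^+}$, and the required relation is exactly the assertion $\operatorname{Synt}(\ell_2)\in L\mathsf J$, i.e.\ Knast's theorem. For the inductive step from $k-1$ to $k\ge 3$: with $W=\hat\phi(U_{k-2})$, $W'=\hat\phi(V_{k-2})$, the inductive hypothesis applied to the restriction of $\hat\phi$ (with the same $\epsilon$) gives $W\widesim_{\overline{\ell_{k-1}}}W'$, while the structural facts give $c(W)=c(W')$ and $\hat\phi(U_{k-1})=PWQ$, $\hat\phi(V_{k-1})=PW'Q$. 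So it remains to prove the \emph{bootstrap} statement: if $W\widesim_{\overline{\ell_{k-1}}}W'$, $c(W)=c(W')$, and $P=(W\xi)^\omega$, $Q=(\upsilon W)^\omega$, then $PWQ\widesim_{\overline{\ell_k}}PW'Q$; equivalently, for all $p,q$, $pPWQq\in\overline{\ell_k}$ iff $pPW'Qq\in\overline{\ell_k}$.

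Given $pPWQq\in\overline{\ell_k}$ with a witnessing factorization $u_1\bar x_k u_2\bar y_k u_3$, the argument turns on locating the distinguished occurrence of $W$ (the middle block of $p\cdot P\cdot W\cdot Q\cdot q$) relative to the two markers. If some letter of $c(W)$ lies outside $A_k$, both sides are outside $\overline{\ell_k}$; if $c(W)$ contains both $\bar x_k$ and $\bar y_k$, then because $P$ and $Q$ each contain a copy of $W$ one checks that no admissible factorization can exist on either side; if $c(W)$ contains exactly one marker, say $\bar x_k$, then the copy of $W$ inside $Q$ exhibits an occurrence of $\bar x_k$ to the right of the distinguished $W$, and since every occurrence of $\bar x_k$ must lie in $u_1$ or be the marker itself, the distinguished $W$ is forced inside $u_1$ (symmetrically inside $u_3$ when the marker is $\bar y_k$); and if $c(W)\subseteq A_{k-1}$, the span of the distinguished $W$ contains neither marker, so it lies inside one of $u_1,u_2,u_3$. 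In every non-trivial case the distinguished $W$ sits inside a single piece $u_i$, and replacing it there by $W'$ keeps the factorization valid — using $c(W)=c(W')$ for $i\in\{1,3\}$ and $W\widesim_{\overline{\ell_{k-1}}}W'$ for $i=2$ — with the converse obtained the same way. I expect the main obstacle to be making these positional arguments fully rigorous in the profinite setting: pinning down via equidivisibility where the markers $\bar x_k,\bar y_k$ can occur in $pPWQq$, exploiting that $P$ and $Q$ are $\omega$-powers of words having $W$ as a factor, so as to forbid the distinguished $W$ from straddling either marker; getting the inductive framework right — fixing the ambient alphabet $A_m$ so that the hypothesis applies to the sub-term $\hat\phi(U_{k-2})$, which a priori lives over the larger alphabet — is the other point requiring care.
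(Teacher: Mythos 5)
Your proposal is sound and follows the same broad strategy as the paper's proof — reduce to the claim that $\hat\phi(U_{m-1})$ and $\hat\phi(V_{m-1})$ are $\widesim_{\overline{\ell_m}}$-equivalent for a homomorphism into the preimage of a local submonoid, use equidivisibility to pin the middle occurrence of $W=\hat\phi(U_{m-2})$ against the two markers $\bar x_m,\bar y_m$, and invoke an induction (anchored at Knast's theorem for $k=2$) to replace $W$ by $W'=\hat\phi(V_{m-2})$ — but the inductive scaffolding you set up is genuinely different from the paper's, and it buys a real simplification. The paper inducts on $m$ with hypothesis ``$\operatorname{Synt}(\ell_{m-1})\in L(\mathsf{R}_{m-1}\cap\mathsf{L}_{m-1})$'', which lives over the smaller alphabet $A_{m-1}$; to bring that hypothesis to bear on the current data (which lives in $\widehat{A_m^+}$) the paper first proves that $\operatorname{Synt}(\ell_{m-1})$ embeds as a subsemigroup of $\operatorname{Synt}(\ell_m)$, characterizes the image of the embedding, checks an idempotency transfer lemma, and then packages all of this into a bridging lemma. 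You instead fix the ambient alphabet $A_m$ once and for all and induct on the level $k$ of the terms $U_{k-1},V_{k-1}$ and languages $\ell_k$, phrasing the induction hypothesis directly in terms of $\widesim_{\overline{\ell_{k-1}}}$ on $\widehat{A_m^+}$ with the \emph{same} idempotent $\epsilon$ and the restriction of the same $\hat\phi$; this makes the hypothesis immediately applicable and sidesteps the subsemigroup embedding and its corollary entirely. Your case split (on $c(W)$) is organized slightly differently from the paper's (on whether $\bar x_m,\bar y_m$ land in $aP$ or in $Qb$), but the two amount to the same positional reasoning and both rely on equidivisibility at the same places; note that both you and the paper leave some of those positional steps informal, e.g.\ neither explicitly treats the degenerate situation where $\bar x_m$ occurs only in $\hat\phi(y_{m-1})$ or $q$ while $\bar y_m$ occurs only in $\hat\phi(x_{m-1})$ or $p$, which simply forces both sides out of $\overline{\ell_m}$ — worth a sentence if you write this up fully. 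In short: same core idea, cleaner bookkeeping on your side because the alphabet is kept fixed.
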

The technical proof of Proposition \ref{hard-half} is postponed to Section \ref{hard-proof}.

\begin{thm}
	The pseudovariety $\mathsf{R}_{m} \cap \mathsf{L}_{m}$ is not local for every $m\geq 1$.
\end{thm}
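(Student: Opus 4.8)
The plan is to assemble the statement from the pieces already in place, splitting into the trivial range and the substantive one. For $m=1$ and $m=2$ there is nothing new to prove: since $\mathsf{R}_1\cap\mathsf{L}_1=\mathsf{R}_2\cap\mathsf{L}_2=\mathsf{J}$, the non-locality of $\mathsf{R}_m\cap\mathsf{L}_m$ in these cases is precisely Knast's theorem, witnessed by the language $\ell_2$ whose syntactic semigroup lies in $L\mathsf{J}\setminus\mathsf{J}*\mathsf{D}$. So the whole content lives in the range $m>2$, and there I would follow the standard route to non-locality: produce a single finite semigroup that separates $L(\mathsf{R}_m\cap\mathsf{L}_m)$ from $(\mathsf{R}_m\cap\mathsf{L}_m)*\mathsf{D}$, recalling that the inclusion $(\mathsf{R}_m\cap\mathsf{L}_m)*\mathsf{D}\subseteq L(\mathsf{R}_m\cap\mathsf{L}_m)$ always holds, so a witness of strict inclusion is enough to refute locality.

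For $m>2$ the separating semigroup is $\operatorname{Synt}(\ell_m)$, where $\ell_m$ is the language obtained inductively by wrapping $\ell_{m-1}$ between the fresh letters $\bar x_m$ and $\bar y_m$. Proposition~\ref{easy-half} already supplies one half: using Kufleitner's description $(\mathsf{R}_m\cap\mathsf{L}_m)*\mathsf{D}=\llbracket P_{m-1}=Q_{m-1}\rrbracket$ together with the homomorphisms $\phi_m$ that realize $P_{m-1}$ and $Q_{m-1}$ as pseudowords separated by $\overline{\ell_m}$, one gets $\operatorname{Synt}(\ell_m)\notin(\mathsf{R}_m\cap\mathsf{L}_m)*\mathsf{D}$. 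Proposition~\ref{hard-half} supplies the other half: every local monoid $e\operatorname{Synt}(\ell_m)e$ satisfies $U_{m-1}=V_{m-1}$, hence lies in $\mathsf{R}_m\cap\mathsf{L}_m$ by Theorem~\ref{Kufleitner,UV}, so that $\operatorname{Synt}(\ell_m)\in L(\mathsf{R}_m\cap\mathsf{L}_m)$.

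Combining these two propositions yields, for every $m\geq 2$, a finite semigroup in $L(\mathsf{R}_m\cap\mathsf{L}_m)\setminus(\mathsf{R}_m\cap\mathsf{L}_m)*\mathsf{D}$, whence $(\mathsf{R}_m\cap\mathsf{L}_m)*\mathsf{D}\subsetneq L(\mathsf{R}_m\cap\mathsf{L}_m)$ and $\mathsf{R}_m\cap\mathsf{L}_m$ fails to be local. Since the case $m=2$ subsumes $\mathsf{J}$ (and the case $m=1$ is identical), this covers all $m\geq 1$, so the only external input strictly needed is Knast's theorem as a sanity check at the bottom, while the rest is bookkeeping with Theorem~\ref{Kufleitner,UV} and the $P_{m}=Q_{m}$ description.

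The single real obstacle is Proposition~\ref{hard-half}, the membership $\operatorname{Synt}(\ell_m)\in L(\mathsf{R}_m\cap\mathsf{L}_m)$. Establishing it means understanding the idempotents of $\operatorname{Synt}(\ell_m)$ and showing that, after conjugating by any such idempotent $e$, the pseudowords $U_{m-1}$ and $V_{m-1}$ can no longer be distinguished by $\overline{\ell_m}$; this is where the inductive structure of $\ell_m$ and a careful factorization analysis exploiting the equidivisibility of $\widehat{A_m^+}$ come in, and it is the content deferred to Section~\ref{hard-proof}. Everything else in the proof of the theorem is immediate once that proposition is granted.
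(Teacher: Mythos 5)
Your proposal matches the paper's argument exactly: the theorem is obtained by combining Proposition~\ref{easy-half} and Proposition~\ref{hard-half}, which together place $\operatorname{Synt}(\ell_m)$ in $L(\mathsf{R}_m\cap\mathsf{L}_m)\setminus(\mathsf{R}_m\cap\mathsf{L}_m)*\mathsf{D}$ for $m\geq 2$, with the case $m=1$ collapsing to $m=2$ since $\mathsf{R}_1\cap\mathsf{L}_1=\mathsf{R}_2\cap\mathsf{L}_2=\mathsf{J}$. Your extra commentary correctly identifies that Proposition~\ref{hard-half} is the substantive obstacle and that the inclusion $\mathsf{V}*\mathsf{D}\subseteq L\mathsf{V}$ makes a single separating semigroup sufficient.
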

\begin{proof}
	The result follows immediately from Propositions \ref{easy-half} and \ref{hard-half}.
\end{proof}

\section{Proof of Proposition \ref{hard-half}}

Let $\psi_{m}:A_{m}^{+}\rightarrow \operatorname{Synt}(\ell_{m})$ be the natural homomorphism.
For an idempotent $w_{m}$ of $\operatorname{Synt}(\ell_{m})$, consider the local submonoid $w_{m}\operatorname{Synt}(\ell_{m})w_{m}$ of $\operatorname{Synt}(\ell_{m})$.
We take $B_{m}=\hat\psi_{m}^{-1}(w_{m}\operatorname{Synt}(\ell_{m})w_{m})\subseteq \widehat{A_{m}^{+}}$ and we observe that $B_{m}$ is a closed semigroup.

Note that, whenever $u$ is in $\hat\psi_{m}^{-1}(w_{m})$ and $v$ is in $B_{m}$, we have $v\widesim_{\overline{\ell_{m}}}uvu$.

\begin{lem}
	Let $u$ be an element of $\hat\psi_{m}^{-1}(w_{m})$ and $v$ an element of $B_m$. If $\bar{x}_{m}$ (respectively $\bar y_m$) is a letter of $u$, then it is also a letter of $v$.
\end{lem}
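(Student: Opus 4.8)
We assume $m\geq 3$, so that the letters $\bar x_m,\bar y_m$ and the recursive description of $\ell_m$ are available (for $m=2$ the conclusion is part of Knast's theorem). The plan is to argue by contradiction, exploiting the observation recorded just above the statement that $v\widesim_{\overline{\ell_m}}uvu$: I will produce a context in which $v$ and $uvu$ behave differently with respect to $\overline{\ell_m}$. So suppose $\bar x_m$ is a letter of $u$ but, for contradiction, not of $v$ (the case of $\bar y_m$ will be handled symmetrically). Since the alphabet of $v$ is contained in $A_m=A_{m-1}\cup\{\bar x_m,\bar y_m\}$, the assumption $\bar x_m\notin c(v)$ forces $c(v)\subseteq A_{m-1}\cup\{\bar y_m\}$; this is the only property of $v$ that the argument will use (in particular, idempotency of $w_m$ plays no role here).

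I would first record the decomposition
\[
\overline{\ell_m}=\overline{(A_{m-1}\cup\{\bar x_m\})^*}\;\bar x_m\;\overline{\ell_{m-1}}\;\bar y_m\;\overline{(A_{m-1}\cup\{\bar y_m\})^*},
\]
which holds because closure commutes with concatenation of rational languages (already used, for $m=2$, in the proof of Proposition~\ref{easy-half}), together with the elementary fact that a pseudoword lies in $\overline{B^*}$ exactly when its content is contained in $B$. Consequently, $w\in\overline{\ell_m}$ if and only if $w$ admits a factorization $w=w_1\,\bar x_m\,w_2\,\bar y_m\,w_3$ with $c(w_1)\subseteq A_{m-1}\cup\{\bar x_m\}$, with $w_2\in\overline{\ell_{m-1}}$ (so that $\bar x_m,\bar y_m\notin c(w_2)$), and with $c(w_3)\subseteq A_{m-1}\cup\{\bar y_m\}$, where $w_1$ or $w_3$ may be empty. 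Now fix a word $z\in\ell_{m-1}$ (this language is clearly nonempty) and take the context $p=\bar x_m\,z\,\bar y_m$, $q=1$.

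With this context, $pv=\bar x_m\,z\,\bar y_m\,v$ lies in $\overline{\ell_m}$: use the factorization above with $w_1$ empty, $w_2=z$, $w_3=v$, which is legitimate because $c(v)\subseteq A_{m-1}\cup\{\bar y_m\}$. On the other hand, $p(uvu)=\bar x_m\,z\,\bar y_m\,uvu$ does \emph{not} lie in $\overline{\ell_m}$. Indeed, $\bar x_m\in c(u)\subseteq c(uvu)$, so $uvu=\alpha\,\bar x_m\,\beta$ for some $\alpha,\beta$; if $\bar x_m\,z\,\bar y_m\,uvu$ had a factorization of the above kind, then, as $\bar x_m$ occurs in neither $w_2$ nor $w_3$, the distinguished $\bar x_m$ would have to lie at or past the displayed $\bar x_m$ of $uvu$, hence past the explicit $\bar y_m$ of $p$, so that $w_1$ would contain that $\bar y_m$ --- impossible, since $\bar y_m\notin c(w_1)$. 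Making this ``positional'' step precise is the only slightly delicate point: one compares the two factorizations $(\bar x_m z\bar y_m\alpha)\,\bar x_m\,\beta=w_1\,\bar x_m\,(w_2\bar y_m w_3)$ by repeated use of the equidivisibility of $\widehat{A_m^+}$, using that $z$ is a finite word over $A_{m-1}$ and that $\bar x_m,\bar y_m\notin A_{m-1}$ to exclude every splitting in which $w_1$ misses $\bar y_m$ (each such splitting forces $w_1$ to be empty), and then, after cancelling the leading $\bar x_m$, repeating the comparison at the $\bar y_m$ of $p$ to deduce $\bar x_m\in c(w_3)$, a contradiction. Thus $pv\in\overline{\ell_m}$ while $p(uvu)\notin\overline{\ell_m}$, contradicting $v\widesim_{\overline{\ell_m}}uvu$, which proves the claim for $\bar x_m$.

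For $\bar y_m$ one runs the mirror argument with $p=1$ and $q=\bar x_m\,z\,\bar y_m$: if $\bar y_m\notin c(v)$ then $c(v)\subseteq A_{m-1}\cup\{\bar x_m\}$, so $vq\in\overline{\ell_m}$ (factorization with $w_1=v$, $w_2=z$, $w_3$ empty), while in any factorization of $(uvu)q$ the distinguished $\bar y_m$ is forced at or before an occurrence of $\bar y_m$ inside $uvu$, which throws the explicit $\bar x_m$ of $q$ into $w_3$ and contradicts $\bar x_m\notin c(w_3)$; once more $v\widesim_{\overline{\ell_m}}uvu$ is violated. The main --- though entirely routine --- obstacle is the equidivisibility bookkeeping in the third paragraph; the rest is a direct reading of the definition of $\ell_m$.
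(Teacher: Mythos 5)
Your proof is correct and follows the same strategy as the paper: you exploit the relation $v\widesim_{\overline{\ell_m}}uvu$ by producing a left context of the form $\bar x_m\,(\text{element of }\overline{\ell_{m-1}})\,\bar y_m$ that sends $v$ into $\overline{\ell_m}$ but sends $uvu$ outside of it, the latter because the hypothesis $\bar x_m\in c(u)$ forces an occurrence of $\bar x_m$ after the displayed $\bar y_m$. The only cosmetic differences are that you take a concrete finite word $z\in\ell_{m-1}$ rather than a general pseudoword $\lambda\in\overline{\ell_{m-1}}$, and you spell out the equidivisibility bookkeeping that the paper leaves implicit.
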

\begin{proof}
	Suppose there is $v\in B_{m}$ such that $\bar{x}_{m}$ is not a letter of $v$.
	Note that, for $\lambda \in \overline{\ell_{m-1}}$, we get $\bar{x}_{m}\lambda\bar{y}_{m}v\in \overline{\ell_{m}}$ and $\bar{x}_{m}\lambda\bar{y}_{m}uvu \notin \overline{\ell_{m}}$, which is in contradiction with $v\widesim_{\overline{\ell_{m}}}uvu$.
	Hence for all $v\in B_{m}$, $\bar{x}_{m}$ is a letter of $v$.
	The argument for $\bar y_m$ is similar.
\end{proof}

Let $\Sigma_{1}=\lbrace s,t,x_{1},y_{1}\rbrace $, $\Sigma_{m}=\Sigma_{m-1} \cup \lbrace x_{m}, y_{m} \rbrace$ ($m\geq 2$), and let  $\gamma:\Sigma_{m-1}^{+}\rightarrow B_{m}$ be a semigroup homomorphism.

The next lemma involves the pseudowords $U_m$ and $V_m$ of Theorem \ref{Kufleitner,UV}.

\begin{lem}
	For every $n<m$, the pseudowords $\hat\gamma(U_{n})$ and $\hat\gamma(V_{n})$ have the same content.
\end{lem}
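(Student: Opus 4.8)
The plan is to reduce the claim to the formal fact that the content function ignores $\omega$-powers, together with a short structural induction on the terms $U_n$ and $V_n$.

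First I would recall that the content function $c\colon\widehat{A_m^+}\to\mathcal P(A_m)$ is a continuous homomorphism into the semilattice $(\mathcal P(A_m),\cup)$, all of whose elements are idempotent; hence $c(p^\omega)=c(p)^\omega=c(p)$ for every $p\in\widehat{A_m^+}$. Since $n<m$, every variable occurring in $U_n$ or $V_n$ lies in $\Sigma_{m-1}$, so $\hat\gamma(U_n)$ and $\hat\gamma(V_n)$ are meaningful, and as $\hat\gamma$ is a continuous homomorphism it too commutes with $\omega$-powers. Combining these observations, $c(\hat\gamma(U_n))$ is simply the union of the sets $c(\gamma(z))$ taken over all variables $z$ occurring in the term $U_n$, and likewise for $V_n$.

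It thus remains to check that $U_n$ and $V_n$ involve the same set of variables, which I would prove by induction on $n$. For $n=1$ one reads directly from $U_1=(sx_1)^\omega s(y_1t)^\omega$ and $V_1=(sx_1)^\omega t(y_1t)^\omega$ that both involve exactly $s,t,x_1,y_1$. For the inductive step, the recursions $U_n=(U_{n-1}x_n)^\omega U_{n-1}(y_nU_{n-1})^\omega$ and $V_n=(U_{n-1}x_n)^\omega V_{n-1}(y_nU_{n-1})^\omega$ show that the variables of $U_n$ are those of $U_{n-1}$ together with $x_n,y_n$, whereas those of $V_n$ are the variables of $U_{n-1}$ and of $V_{n-1}$ together with $x_n,y_n$; the induction hypothesis gives that $V_{n-1}$ has the same variables as $U_{n-1}$, so the two sets agree. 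Together with the previous paragraph this yields $c(\hat\gamma(U_n))=c(\hat\gamma(V_n))$.

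I do not expect any serious obstacle here: this lemma uses no special property of $B_m$ or of the languages $\ell_m$ — it is a structural fact about the Kufleitner terms — and the only points needing attention are that the hypothesis $n<m$ is precisely what makes $\hat\gamma(U_n)$ and $\hat\gamma(V_n)$ meaningful, and that the variable-counting induction is run carefully. The genuinely technical content of Proposition~\ref{hard-half} should instead surface in the later lemmas, where $\hat\gamma(U_{m-1})$ and $\hat\gamma(V_{m-1})$ must be compared as pseudowords rather than merely through their contents.
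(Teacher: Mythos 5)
Your proof is correct and takes essentially the same route as the paper: reduce to showing that $U_n$ and $V_n$ involve the same set of variables (equivalently, $c(U_n)=\Sigma_n=c(V_n)$), which both you and the paper establish by a short structural induction. Your version merely spells out the justification for the reduction — that $c\circ\hat\gamma$ is a continuous homomorphism into a finite semilattice of idempotents and therefore kills $\omega$-powers — which the paper leaves implicit behind the phrase ``it suffices to show.''
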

\begin{proof}
	It suffices to show that $U_n$ and $V_n$ have the same content. In fact, it is easy to show inductively that $c(U_n)=\Sigma_n=c(V_n)$. 
\end{proof}
\begin{cor}
	Suppose that $\bar{x}_{m}$ and $\bar{y}_{m}$ are not letters of $\hat\gamma(U_{m-2})$. Then, for every $\sigma$ in $\widehat{\Sigma^{+}_{m-2}}$, $\bar{x}_{m}$ and $\bar{y}_{m}$ are not letters of $\hat\gamma(\sigma)$.
\end{cor}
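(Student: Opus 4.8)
The plan is to push the statement down to the level of the content function, using that $\mathcal{P}(A_m)$, under union, is a finite semilattice. First I would note that the composite $c\circ\hat\gamma:\widehat{\Sigma_{m-1}^{+}}\to\mathcal{P}(A_m)$ is a continuous homomorphism into a finite semilattice, and that any such homomorphism factors through the content: its value on a word is the union of its values on the letters occurring in that word, and by density of $\Sigma_{m-1}^{+}$ and continuity the same holds for pseudowords. Thus, for every $\sigma\in\widehat{\Sigma_{m-1}^{+}}$,
\[
c\bigl(\hat\gamma(\sigma)\bigr)=\bigcup_{a\in c(\sigma)}c\bigl(\hat\gamma(a)\bigr).
\]

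Next I would specialise this to $\sigma=U_{m-2}$. By the proof of the preceding lemma, $c(U_{m-2})=\Sigma_{m-2}$, i.e.\ every letter of $\Sigma_{m-2}$ actually occurs in $U_{m-2}$, so $c(\hat\gamma(U_{m-2}))=\bigcup_{a\in\Sigma_{m-2}}c(\hat\gamma(a))$. The hypothesis that $\bar{x}_{m}$ and $\bar{y}_{m}$ are not letters of $\hat\gamma(U_{m-2})$ then forces $\bar{x}_{m},\bar{y}_{m}\notin c(\hat\gamma(a))$ for each single letter $a\in\Sigma_{m-2}$. Finally, for an arbitrary $\sigma\in\widehat{\Sigma_{m-2}^{+}}$, viewed inside $\widehat{\Sigma_{m-1}^{+}}$ through the natural continuous embedding induced by the inclusion $\Sigma_{m-2}\subseteq\Sigma_{m-1}$ (so that $\hat\gamma(\sigma)$ is unambiguous), the displayed formula gives $c(\hat\gamma(\sigma))\subseteq\bigcup_{a\in\Sigma_{m-2}}c(\hat\gamma(a))$, which contains neither $\bar{x}_{m}$ nor $\bar{y}_{m}$; hence neither is a letter of $\hat\gamma(\sigma)$, which is the claim.

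I do not anticipate a genuine obstacle here: the entire argument is a content computation. The only point that merits a sentence of care is the semilattice factorisation of $c\circ\hat\gamma$ — that a continuous homomorphism from $\widehat{\Sigma_{m-1}^{+}}$ into a finite semilattice sends a pseudoword to the union of its values on the letters of that pseudoword's content — and even that is standard. Everything after it reduces to the trivial inclusion $c(\sigma)\subseteq\Sigma_{m-2}$ for $\sigma\in\widehat{\Sigma_{m-2}^{+}}$.
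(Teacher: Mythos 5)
Your argument is correct and is exactly the argument the paper's one-line proof ("immediate from $c(U_m)=\Sigma_m=c(V_m)$") compresses: since $U_{m-2}$ has full content $\Sigma_{m-2}$ and $c\circ\hat\gamma$ is a continuous homomorphism into a finite semilattice, $c(\hat\gamma(U_{m-2}))=\bigcup_{a\in\Sigma_{m-2}}c(\hat\gamma(a))$ already contains $c(\hat\gamma(\sigma))$ for every $\sigma\in\widehat{\Sigma_{m-2}^+}$. You have simply made explicit the semilattice factorisation that the paper treats as understood.
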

\begin{proof}
	This is immediate from $c(U_m)=\Sigma_m=c(V_m)$. 
\end{proof}
\begin{lem}
	Let $m>2$. Then, $\operatorname{Synt}(\ell_{m-1})$ is a subsemigroup of $\operatorname{Synt}(\ell_{m})$.
\end{lem}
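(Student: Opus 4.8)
The plan is to realize $\operatorname{Synt}(\ell_{m-1})$ inside $\operatorname{Synt}(\ell_{m})$ through the natural inclusion $A_{m-1}^{+}\hookrightarrow A_{m}^{+}$. Writing $\psi_{m-1}\colon A_{m-1}^{+}\to\operatorname{Synt}(\ell_{m-1})$ and $\psi_{m}\colon A_{m}^{+}\to\operatorname{Synt}(\ell_{m})$ for the syntactic homomorphisms, I would show that the restriction $\psi_{m}|_{A_{m-1}^{+}}$ has exactly the same kernel congruence as $\psi_{m-1}$; this at once yields an injective homomorphism $\operatorname{Synt}(\ell_{m-1})\to\operatorname{Synt}(\ell_{m})$ whose image is the subsemigroup $\psi_{m}(A_{m-1}^{+})$, which is precisely the assertion. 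Everything therefore reduces to checking that, for all $u,v\in A_{m-1}^{+}$,
\[
u\widesim_{\ell_{m-1}}v\iff u\widesim_{\ell_{m}}v .
\]
The workhorse will be the remark that, since $\bar{x}_{m},\bar{y}_{m}\notin A_{m-1}$, a word $w\in A_{m-1}^{+}$ belongs to $\ell_{m-1}$ if and only if $\bar{x}_{m}w\bar{y}_{m}$ belongs to $\ell_{m}$: in any factorization $\alpha\bar{x}_{m}\beta\bar{y}_{m}\gamma$ witnessing $\bar{x}_{m}w\bar{y}_{m}\in\ell_{m}$ (with $\alpha\in(A_{m-1}\cup\{\bar{x}_{m}\})^{*}$, $\beta\in\ell_{m-1}$, $\gamma\in(A_{m-1}\cup\{\bar{y}_{m}\})^{*}$), the single occurrences of $\bar{x}_{m}$ and $\bar{y}_{m}$ force $\alpha$ and $\gamma$ to be empty and $w=\beta\in\ell_{m-1}$, and the converse is immediate.

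For the implication $u\widesim_{\ell_{m}}v\Rightarrow u\widesim_{\ell_{m-1}}v$ I would take arbitrary $p,q\in A_{m-1}^{*}$ and test the context $(\bar{x}_{m}p,\,q\bar{y}_{m})$ in $A_{m}^{+}$. Applying the remark to $puq$ and $pvq$ (both in $A_{m-1}^{+}$) and using the hypothesis $u\widesim_{\ell_{m}}v$ for the middle step,
\[
puq\in\ell_{m-1}\iff\bar{x}_{m}puq\bar{y}_{m}\in\ell_{m}\iff\bar{x}_{m}pvq\bar{y}_{m}\in\ell_{m}\iff pvq\in\ell_{m-1},
\]
so $u\widesim_{\ell_{m-1}}v$.

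For the converse, assume $u\widesim_{\ell_{m-1}}v$ and suppose $puq\in\ell_{m}$ for some $p,q\in A_{m}^{*}$, say $puq=\alpha\bar{x}_{m}\beta\bar{y}_{m}\gamma$ with $\alpha\in(A_{m-1}\cup\{\bar{x}_{m}\})^{*}$, $\beta\in\ell_{m-1}$ and $\gamma\in(A_{m-1}\cup\{\bar{y}_{m}\})^{*}$. Because $u$ contains none of the marked letters $\bar{x}_{m},\bar{y}_{m}$, the distinguished occurrence of $u$ inside $puq$ cannot overlap the two marked positions of this factorization, hence it lies entirely inside $\alpha$, or inside $\beta$, or inside $\gamma$. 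If it lies inside $\alpha$ or inside $\gamma$, replacing it by $v\in A_{m-1}^{+}$ yields a factorization of $pvq$ of the same shape (the modified block stays in $(A_{m-1}\cup\{\bar{x}_{m}\})^{*}$, respectively $(A_{m-1}\cup\{\bar{y}_{m}\})^{*}$), so $pvq\in\ell_{m}$; if it lies inside $\beta$, writing $\beta=\beta_{1}u\beta_{2}$ with $\beta_{1},\beta_{2}\in A_{m-1}^{*}$ and invoking $u\widesim_{\ell_{m-1}}v$ with the context $(\beta_{1},\beta_{2})$ gives $\beta_{1}v\beta_{2}\in\ell_{m-1}$, whence again $pvq\in\ell_{m}$. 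Exchanging the roles of $u$ and $v$ completes the equivalence. The only point that needs a little care — and the one I would flag as the main, though mild, obstacle — is the confinement claim in this last paragraph, namely that an occurrence of a $\{\bar{x}_{m},\bar{y}_{m}\}$-free factor of $\alpha\bar{x}_{m}\beta\bar{y}_{m}\gamma$ must sit wholly within one of $\alpha$, $\beta$, $\gamma$; this is purely a matter of tracking the positions of $\bar{x}_{m}$ and $\bar{y}_{m}$ and presents no real difficulty.
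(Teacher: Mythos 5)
Your proposal is correct and follows essentially the same route as the paper: you realize $\operatorname{Synt}(\ell_{m-1})$ inside $\operatorname{Synt}(\ell_m)$ via the inclusion $A_{m-1}^{+}\hookrightarrow A_{m}^{+}$, and your two implications (testing contexts $\bar{x}_m p, q\bar{y}_m$ for one direction, and the three-way case split on where the $u$-block sits in $\alpha\bar{x}_m\beta\bar{y}_m\gamma$ for the other) correspond exactly to the paper's injectivity and well-definedness arguments for the map $\lambda=\psi_m\circ i\circ\psi_{m-1}^{-1}$. Framing the whole thing as ``the restricted kernel congruences coincide'' is a slightly tidier way to organize the same computation.
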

\begin{proof}
	Let $i:A^{+}_{m-1}\rightarrow A^{+}_{m}$ be the inclusion function. Define $\lambda:\operatorname{Synt}(\ell_{m-1})\rightarrow\operatorname{Synt}(\ell_{m})$ by $\lambda=\psi_{m}\circ i\circ \psi_{m-1}^{-1}$.
	
	\begin{equation*}
		\begin{tikzcd}
			A^{+}_{m-1} \arrow[d,"\psi_{m-1}"'] \arrow[r,"i" ] & A^{+}_{m} \arrow[d,"\psi_{m}"] \\
			\operatorname{Synt}(\ell_{m-1}) \arrow[r,dashed,"\lambda" ] &  \operatorname{Synt}(\ell_{m})
		\end{tikzcd}
	\end{equation*}
	We want to prove that $\lambda$ is a function. Suppose $u$ is an element of $\operatorname{Synt}(\ell_{m-1})$, and take $k$ and $l$ as elements of $\psi_{m-1}^{-1}(u)$. Since $k$ and $l$ are in $\psi_{m-1}^{-1}(u)$, we have that for all $a,b \in A_{m-1}^{*}$, $akb\in \ell_{m-1}$ if and only if $alb\in \ell_{m-1}$.
	We claim that for all $a,b\in A_{m}^{*}$, $akb\in \ell_{m}$ if and only if $alb\in \ell_{m}$.
	
	Suppose that $akb\in \ell_{m}$, therefore $akb=a'\bar x_m b'\bar y_mc'$ where $a',b',c'$ are words of $A_m^*$, $\bar x_m$ is not a letter of $b'$ or $c'$, $\bar y_m$ is not a letter of $a'$ or $b'$, and $b$ is a word of $\ell_m$.
	We consider the three possible situations. Firstly, if $a=a'\bar x_m b'\bar y_m\mu$, where $c'=\mu kb$ and $\mu \in (A_{m-1}\cup \{\bar y_m\})^*$, then $alb=a'\bar x_m b'\bar y_m\mu lb$ and $a'\in (A_{m-1}\cup{\bar x_m})^*$, $b'$ is a word of $\ell_{m-1}$, and $\mu lb \in (A_{m-1}\cup{\bar y_m})^*$; we conclude that $alb$ is a word of $\ell_m$. Secondly, if $c=\mu \bar x_m b'\bar y_m c'$, where $a'=ak \mu$ and $\mu \in (A_{m-1}\cup \{\bar x_m\})^*$, then $alb=al \mu \bar x_m b'\bar y_mc' $ and $al \mu \in (A_{m-1}\cup{\bar x_m})^*$, $b'$ is a word of $\ell_{m-1}$ , and $c' \in (A_{m-1}\cup{\bar y_m})^*$; we conclude that $alb$ is a word of $\ell_m$. Lastly, if $a=a'\bar x_m \mu$, $\mu \in (A_{m-1}\cup \{\bar x_m\})^*$, $c=\mu'\bar y_mc'$ and $\mu' \in (A_{m-1}\cup \{\bar y_m\})^*$, where $b'=\mu k\mu'$, then $alb=a'\bar x_m \mu l \mu'\bar y_mc'$ and $a'\in (A_{m-1}\cup{\bar x_m})^*$, $\mu l \mu' $ is a word of $\ell_{m-1}$, as $k$ and $l$ are syntactically equivalent with respect to $\ell_{m-1}$ , and $c' \in (A_{m-1}\cup{\bar y_m})^*$; we conclude that $alb$ is a word of $\ell_m$. Analogously, we have the converse implication.
	This proves the claim.
	
	Therefore $\psi_{m}\circ i\circ \psi_{m-1}^{-1}(x)$ does not depend on the choice of representative for $\psi_{m-1}^{-1}(x)$. Hence, $\lambda$ defines a function.
	
	We now prove that $\lambda$ defines a semigroup homomorphism. Let $x_{1}$ and $x_{2}$ be elements of $\operatorname{Synt}(\ell_{m-1})$. Since $\psi_{m-1}$ is a homomorphism we have that if $k_{1}\in \psi_{m-1}^{-1}(x_{1})$ and $k_{2}\in \psi_{m-1}^{-1}(x_{2})$, then $k_{1}k_{2}\in\psi_{m-1}^{-1}(x_{1}x_{2})$.
	Therefore, we have that $\lambda(x_{1}x_{2})=\psi_{m}(k_{1}k_{2})=\psi_{m}(k_{1})\psi_{m}(k_{2})=\lambda(x_{1})\lambda(x_{2})$. We thus conclude that $\lambda$ is a homomorphism.
	
	Now we show that $\lambda$ is injective. Let $x_{1}$ and $x_{2}$ elements of $\operatorname{Synt}(\ell_{m-1})$, such that $\lambda(x_{1})=\lambda(x_{2})$ , if $k_{1}$ is an element of $\psi_{m-1}^{-1}(x_{1})$ and $k_{2}$ is an element of $\psi_{m-1}^{-1}(x_{2})$ then $\psi_{m}(k_{1})=\psi_{m}(k_{2})$.
	Then, we have that for all $a,b \in A_{m}^{*}$, $ak_{1}b\in \ell_{m}$ if and only if $ak_{2}b\in \ell_{m}$.
	Taking $a=\bar{x}_{m}a'$, where $a'\in A_{m-1}^{*}$, and $b=b'\bar{y}_{m}$, where $b'\in A_{m-1}^{*}$, we can infer that $\bar{x}_{m}a'k_{1}b'\bar{y}_{m}\in \ell_{m}$ if and only if $\bar{x}_{m}a'k_{2}b'\bar{y}_{m}\in \ell_{m}$, and thus $a'k_{1}b'\in \ell_{m-1}$ if and only if $a'k_{2}b'\in \ell_{m-1}$. Hence, $k_1$ and $k_2$ are syntactically equivalent in $A_{m-1}^+$ with respect to $\ell_{m-1}$.
	Since $\lambda:\operatorname{Synt}(\ell_{m-1})\rightarrow\operatorname{Synt}(\ell_{m})$ is an injective homomorphism, $\operatorname{Synt}(\ell_{m-1})$ is indeed a subsemigroup of $\operatorname{Synt}(\ell_{m})$.
\end{proof}
Let $\lambda$ be the homomorphism defined in the previous proof. 

\begin{cor}
	Let $u$ be an element of $\operatorname{Synt}(\ell_{m})$. There is some $k$ in $\psi_{m}^{-1}(u)$, such that $\bar{x}_{m}$ and $\bar{y}_{m}$ are not letters of $k$, if and only if $u$ is an element of $\lambda(\operatorname{Synt}(\ell_{m-1}))$.
\end{cor}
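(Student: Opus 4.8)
The plan is to prove both implications by unwinding the definition of $\lambda$ and invoking the commutativity of the square established in the proof of the previous lemma, namely $\psi_{m}\circ i=\lambda\circ\psi_{m-1}$, where $i:A_{m-1}^{+}\to A_{m}^{+}$ is the inclusion. The only fact that is not a purely formal manipulation is that $\lambda$ is a well-defined function, so that $\lambda(\psi_{m-1}(k))$ does not depend on the chosen representative $k$; but this was precisely the content of the previous lemma, so it may be used freely here. I do not expect any genuine obstacle.

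For the forward implication, I would start from a word $k\in\psi_{m}^{-1}(u)$ in which neither $\bar{x}_{m}$ nor $\bar{y}_{m}$ occurs. Since the letters of $k$ all lie in $A_{m-1}$, we have $k\in A_{m-1}^{+}$, hence $k=i(k)$, and therefore
\[
u=\psi_{m}(k)=\psi_{m}(i(k))=\lambda(\psi_{m-1}(k)),
\]
which exhibits $u$ as an element of $\lambda(\operatorname{Synt}(\ell_{m-1}))$.

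For the converse, suppose $u=\lambda(v)$ for some $v\in\operatorname{Synt}(\ell_{m-1})$, and choose any $k\in\psi_{m-1}^{-1}(v)$. Then $k\in A_{m-1}^{+}$, so neither $\bar{x}_{m}$ nor $\bar{y}_{m}$ is a letter of $k$; moreover, using again $\psi_{m}\circ i=\lambda\circ\psi_{m-1}$,
\[
\psi_{m}(k)=\psi_{m}(i(k))=\lambda(\psi_{m-1}(k))=\lambda(v)=u,
\]
so that $k\in\psi_{m}^{-1}(u)$ is the desired witness. This completes the proof of both directions.
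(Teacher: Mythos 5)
Your proof is correct and takes essentially the same route as the paper: both directions unwind the definition of $\lambda$ via the commutative square $\psi_m\circ i=\lambda\circ\psi_{m-1}$, using that a word over $A_m$ omitting $\bar{x}_m,\bar{y}_m$ is a word over $A_{m-1}$. The only difference is that you spell out the commutativity explicitly, which the paper leaves implicit.
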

\begin{proof}
	Let $u\in\operatorname{Synt}(\ell_{m})$ and $k\in\psi_{m}^{-1}(u)$, such that $\bar{x}_{m}$ and $\bar{y}_{m}$ are not letters of $k$.
	As $\bar{x}_{m}$ and $\bar{y}_{m}$ are not letters of $k$, $k$ is an element of $A_{m-1}^{+}$.
	Defining $u'=\psi_{m-1}(k)$, which is an element of $\operatorname{Synt}(\ell_{m-1})$. Note that $\lambda(u')=\psi_{m}(k)=u$, and so $u$ is an element of $\lambda(\operatorname{Synt}(\ell_{m-1}))$.
	
	Suppose that $u$ is an element of $\lambda(\operatorname{Synt}(\ell_{m}))$, and take $k\in\psi_{m-1}^{-1}\circ\lambda^{-1}(u)$. In particular, $k$ is an element of $(A_{m-1})^{+}$ and $k\in \psi_{m}^{-1}(u)$, and therefore there is $k$ in $\psi_{m}^{-1}(u)$ such that $\bar{x}_{m}$ and $\bar{y}_{m}$ are not letters of $k$.
\end{proof}
\begin{lem}
	If $u$ is an element of $\hat\psi_{m}^{-1}(w_{m})$, such that $\bar{x}_{m},  \bar{y}_{m}\notin c(u)$, then we have $u\widesim_{\overline{\ell_{m-1}}}uu$. \label{idempotent:lm-1}
\end{lem}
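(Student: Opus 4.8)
The plan is to reformulate the conclusion as the assertion that $\hat\psi_{m-1}(u)$ is an idempotent of $\operatorname{Synt}(\ell_{m-1})$. Indeed $\widesim_{\overline{\ell_{m-1}}}$ is the kernel congruence of the continuous homomorphism $\hat\psi_{m-1}\colon\widehat{A_{m-1}^{+}}\to\operatorname{Synt}(\ell_{m-1})$ (this is the description of the clopen extension of a syntactic congruence recalled in the preliminaries), so $u\widesim_{\overline{\ell_{m-1}}}uu$ is equivalent to $\hat\psi_{m-1}(u)=\hat\psi_{m-1}(uu)=\hat\psi_{m-1}(u)^2$. I would then obtain this by transporting, along the injective homomorphism $\lambda\colon\operatorname{Synt}(\ell_{m-1})\to\operatorname{Synt}(\ell_{m})$ which identifies $\operatorname{Synt}(\ell_{m-1})$ with a subsemigroup of $\operatorname{Synt}(\ell_{m})$, the obvious fact that $\hat\psi_m(u)=w_m$ is idempotent.

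For this I would first observe that $u\widesim_{\overline{\ell_m}}uu$: since $w_m$ is idempotent, $\hat\psi_m(uu)=w_m^2=w_m=\hat\psi_m(u)$, and $\widesim_{\overline{\ell_m}}$ is likewise the kernel congruence of $\hat\psi_m$. Next, since $c$ is a homomorphism into the semilattice ${\cal P}(A_m)$, one has $c(uu)=c(u)$, so neither $\bar{x}_m$ nor $\bar{y}_m$ occurs in $u$ or in $uu$; hence both pseudowords lie in the closed subsemigroup $\widehat{A_{m-1}^{+}}$ of $\widehat{A_m^{+}}$. Finally, the restriction of $\hat\psi_m$ to $\widehat{A_{m-1}^{+}}$ coincides with $\lambda\circ\hat\psi_{m-1}$, since these two continuous homomorphisms agree on the dense subsemigroup $A_{m-1}^{+}$ — precisely the commutativity of the square defining $\lambda=\psi_m\circ i\circ\psi_{m-1}^{-1}$. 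Evaluating at $u$ and $uu$ gives $\lambda(\hat\psi_{m-1}(u))=\hat\psi_m(u)=\hat\psi_m(uu)=\lambda(\hat\psi_{m-1}(uu))$, whence $\hat\psi_{m-1}(u)=\hat\psi_{m-1}(uu)$ by injectivity of $\lambda$, that is, $u\widesim_{\overline{\ell_{m-1}}}uu$.

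The only point requiring a little care is the identification used above between pseudowords of $\widehat{A_m^{+}}$ whose content is contained in $A_{m-1}$ and the elements of the copy of $\widehat{A_{m-1}^{+}}$ sitting inside $\widehat{A_m^{+}}$, namely the closure of $A_{m-1}^{+}$, together with the compatibility $\hat\psi_m|_{\widehat{A_{m-1}^{+}}}=\lambda\circ\hat\psi_{m-1}$. These are standard facts about free profinite semigroups and content functions; granting them, the rest is a one-line diagram chase that uses nothing beyond the idempotency of $w_m$ and the injectivity of $\lambda$, so I foresee no further obstacle.
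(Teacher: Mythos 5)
Your proof is correct, and it takes a genuinely different route from the paper's. The paper argues directly: it first notes that $u\widesim_{\overline{\ell_m}}uu$ (since $w_m$ is idempotent), and then, taking arbitrary $\alpha,\beta$, it wraps with the distinguished letters to pass from $\bar{x}_m\alpha u\beta\bar{y}_m\in\overline{\ell_m}\Leftrightarrow\bar{x}_m\alpha uu\beta\bar{y}_m\in\overline{\ell_m}$ to $\alpha u\beta\in\overline{\ell_{m-1}}\Leftrightarrow\alpha uu\beta\in\overline{\ell_{m-1}}$, using the hypothesis $\bar x_m,\bar y_m\notin c(u)$ and the structure $\ell_m=(A_{m-1}\cup\{\bar x_m\})^*\bar x_m\ell_{m-1}\bar y_m(A_{m-1}\cup\{\bar y_m\})^*$. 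You instead factor the whole thing through the embedding $\lambda\colon\operatorname{Synt}(\ell_{m-1})\hookrightarrow\operatorname{Synt}(\ell_m)$ established earlier, identifying $\widehat{A_{m-1}^+}$ with the clopen subsemigroup of $\widehat{A_m^+}$ of pseudowords with content in $A_{m-1}$, verifying $\hat\psi_m|_{\widehat{A_{m-1}^+}}=\lambda\circ\hat\psi_{m-1}$, and then transporting the idempotency of $w_m$ back by injectivity. The two arguments hinge on the same underlying mechanism — conjugating by $\bar x_m$ and $\bar y_m$ — but you push that mechanism into the already-proved injectivity lemma, which makes your proof a clean diagram chase at the cost of invoking heavier machinery (the identification of $\widehat{A_{m-1}^+}$ inside $\widehat{A_m^+}$ via the content function) and of being restricted to $m>2$ (the range where $\lambda$ was established — which is, in fact, the only range where the present lemma is used). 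The paper's version is more self-contained and avoids appealing to $\lambda$ at this point; it also incidentally contains a small imprecision (it takes $\alpha,\beta$ in $\widehat{A_m^+}\cup\{1\}$ where it should be $\widehat{A_{m-1}^+}\cup\{1\}$), whereas your reformulation sidesteps that entirely.
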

\begin{proof}
	Let $\alpha$ and $\beta$ be arbitrary elements of $\widehat{A_m^+} \cup \{1\}$.
	As $\bar{x}_{m}$ and $\bar{y}_{m}$ are not letters of $u$, we deduce that $\bar{x}_{m}\alpha u \beta \bar{y}_{m}$ is an element of $\overline{\ell_{m}}$ if and only if $\bar{x}_{m}\alpha uu \beta \bar{y}_{m}$ is an element of $\overline{\ell_{m}}$, and therefore $\alpha u \beta$ is an element of $\overline{\ell_{m-1}}$ if and only if $\alpha uu\beta$ is an element of $\overline{\ell_{m-1}}$. We have thus concluded $u$ and $uu$ are syntactically congruent with respect to $\overline{\ell_{m-1}}$.
\end{proof}
\begin{lem}
	Let $m>2$.
	Suppose there is some $u$ in $\hat\psi_{m}^{-1}(w_{m})$ such that $\bar{x}_{m}$ and $\bar{y}_{m}$ are not letters of $u$.
	If $\bar{x}_{m}$ and $\bar{y}_{m}$ are not letters of $\hat\gamma(U_{m-2})$ and $\hat\gamma(V_{m-2})$, and $\operatorname{Synt}(\ell_{m-1})$ is a member of $L(\mathsf{R}_{m-1} \cap \mathsf{L}_{m-1})$, then for all $a,b \in \widehat{A_{m-1}^{+}}\cup\{1\}$, we have that $a\hat\gamma(U_{m-2})b \in \overline{\ell_{m-1}}$ if and only if $a\hat\gamma(V_{m-2})b \in \overline{\ell_{m-1}}$.
	\label{induction}
\end{lem}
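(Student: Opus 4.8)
The plan is to read Lemma~\ref{induction} as the inductive step that lifts membership in $L(\mathsf R_{m-1}\cap\mathsf L_{m-1})$ one level up. Since $m-2\ge 1$, Theorem~\ref{Kufleitner,UV} identifies $\mathsf R_{m-1}\cap\mathsf L_{m-1}$ with $\llbracket U_{m-2}=V_{m-2}\rrbracket$, so the hypothesis $\operatorname{Synt}(\ell_{m-1})\in L(\mathsf R_{m-1}\cap\mathsf L_{m-1})$ says precisely that every local submonoid of $\operatorname{Synt}(\ell_{m-1})$ satisfies $U_{m-2}=V_{m-2}$. The strategy is therefore to produce an idempotent $e$ of $\operatorname{Synt}(\ell_{m-1})$ and to show that $\hat\psi_{m-1}\circ\hat\gamma$, restricted to the subsemigroup $\widehat{\Sigma_{m-2}^{+}}$, is a continuous homomorphism into the local monoid $e\operatorname{Synt}(\ell_{m-1})e$. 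Evaluating at $U_{m-2}$ and $V_{m-2}$ will then give $\hat\psi_{m-1}(\hat\gamma(U_{m-2}))=\hat\psi_{m-1}(\hat\gamma(V_{m-2}))$; and since $\overline{\ell_{m-1}}=\hat\psi_{m-1}^{-1}(\psi_{m-1}(\ell_{m-1}))$, this equality is exactly equivalent to the asserted statement that $a\hat\gamma(U_{m-2})b\in\overline{\ell_{m-1}}$ iff $a\hat\gamma(V_{m-2})b\in\overline{\ell_{m-1}}$ for all $a,b\in\widehat{A_{m-1}^{+}}\cup\{1\}$.

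To build $e$, I will use the hypothesis to fix some $u\in\hat\psi_{m}^{-1}(w_{m})$ with $\bar{x}_{m},\bar{y}_{m}\notin c(u)$; then $u$ lies in $\widehat{A_{m-1}^{+}}$ (inside $\widehat{A_{m}^{+}}$), and I set $e:=\hat\psi_{m-1}(u)$. Passing to continuous extensions in the commuting square that defines $\lambda$ gives $\hat\psi_{m}\circ\iota=\lambda\circ\hat\psi_{m-1}$, where $\iota:\widehat{A_{m-1}^{+}}\hookrightarrow\widehat{A_{m}^{+}}$ is the inclusion; hence $\lambda(e)=\hat\psi_{m}(u)=w_{m}$, and since $\lambda$ is an injective homomorphism and $w_{m}$ is idempotent, $e$ is idempotent. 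The auxiliary fact I will record is the equivalence
\[
\lambda(z)\in w_{m}\operatorname{Synt}(\ell_{m})w_{m}\iff z\in e\operatorname{Synt}(\ell_{m-1})e\qquad(z\in\operatorname{Synt}(\ell_{m-1})),
\]
whose backward direction is immediate from $\lambda(eze)=w_{m}\lambda(z)w_{m}$, and whose forward direction follows because $\lambda(z)\in w_{m}\operatorname{Synt}(\ell_{m})w_{m}$ forces $w_{m}\lambda(z)w_{m}=\lambda(z)$, so $\lambda(eze)=\lambda(z)$ and injectivity of $\lambda$ gives $eze=z$.

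Next I will restrict $\hat\gamma$ to $\widehat{\Sigma_{m-2}^{+}}$. The corollary asserting that $\bar{x}_{m},\bar{y}_{m}$ are letters of no $\hat\gamma(\sigma)$ with $\sigma\in\widehat{\Sigma_{m-2}^{+}}$ applies here, because by hypothesis $\bar{x}_{m},\bar{y}_{m}$ occur in neither $\hat\gamma(U_{m-2})$ nor $\hat\gamma(V_{m-2})$; thus $\hat\gamma$ maps $\widehat{\Sigma_{m-2}^{+}}$ into $\widehat{A_{m-1}^{+}}$, so $\hat\psi_{m-1}\circ\hat\gamma$ is defined on it. For a letter $\sigma\in\Sigma_{m-2}$ we have $\gamma(\sigma)\in B_{m}=\hat\psi_{m}^{-1}(w_{m}\operatorname{Synt}(\ell_{m})w_{m})$ and $\gamma(\sigma)\in\widehat{A_{m-1}^{+}}$, hence $\lambda(\hat\psi_{m-1}(\gamma(\sigma)))=\hat\psi_{m}(\gamma(\sigma))\in w_{m}\operatorname{Synt}(\ell_{m})w_{m}$, so the displayed equivalence places $\hat\psi_{m-1}(\gamma(\sigma))$ in $e\operatorname{Synt}(\ell_{m-1})e$. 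As a finite subsemigroup is closed, the continuous homomorphism $\hat\psi_{m-1}\circ\hat\gamma$ restricted to $\widehat{\Sigma_{m-2}^{+}}$ then takes all its values in the local monoid $e\operatorname{Synt}(\ell_{m-1})e$. That monoid belongs to $\mathsf R_{m-1}\cap\mathsf L_{m-1}=\llbracket U_{m-2}=V_{m-2}\rrbracket$, hence satisfies $U_{m-2}=V_{m-2}$, which gives $\hat\psi_{m-1}(\hat\gamma(U_{m-2}))=\hat\psi_{m-1}(\hat\gamma(V_{m-2}))$; because $\overline{\ell_{m-1}}=\hat\psi_{m-1}^{-1}(\psi_{m-1}(\ell_{m-1}))$, for every $a,b\in\widehat{A_{m-1}^{+}}\cup\{1\}$ the two memberships $a\hat\gamma(U_{m-2})b\in\overline{\ell_{m-1}}$ and $a\hat\gamma(V_{m-2})b\in\overline{\ell_{m-1}}$ reduce to one and the same condition, which finishes the argument.

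I expect the main obstacle to be precisely this transfer between the three semigroups $\widehat{\Sigma_{m-2}^{+}}$, $\operatorname{Synt}(\ell_{m-1})$ and $\operatorname{Synt}(\ell_{m})$: first, that $\hat\gamma$ restricted to $\widehat{\Sigma_{m-2}^{+}}$ really lands in $\widehat{A_{m-1}^{+}}$ (so that $\hat\psi_{m-1}$ and the embedding $\lambda$ can be used at all), and second, that the induced homomorphism lands in the specific local monoid $e\operatorname{Synt}(\ell_{m-1})e$ rather than merely in $\operatorname{Synt}(\ell_{m-1})$. This is the one point where all four hypotheses ($m>2$, the existence of $u$, the content restriction on $\hat\gamma(U_{m-2})$ and $\hat\gamma(V_{m-2})$, and the inductive assumption on $\operatorname{Synt}(\ell_{m-1})$) are used at once, through the compatibility $\hat\psi_{m}\circ\iota=\lambda\circ\hat\psi_{m-1}$; the remaining steps — idempotency of $e$, the displayed equivalence, and converting satisfaction of a pseudoidentity into membership in $\overline{\ell_{m-1}}$ — are routine.
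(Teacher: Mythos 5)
Your proof is correct and follows essentially the same strategy as the paper: identify the idempotent $e=\hat\psi_{m-1}(u)$, use the content hypothesis to pull $\hat\gamma$ back to a continuous homomorphism into the local monoid $e\operatorname{Synt}(\ell_{m-1})e$, and then invoke the inductive hypothesis together with Theorem~\ref{Kufleitner,UV}. The only (harmless) deviations are that you derive idempotency of $e$ from injectivity of $\lambda$ rather than from Lemma~\ref{idempotent:lm-1}, and you make the transfer explicit via the equivalence $\lambda(z)\in w_m\operatorname{Synt}(\ell_m)w_m\iff z\in e\operatorname{Synt}(\ell_{m-1})e$, which the paper leaves implicit in its definition of $B_m'$.
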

\begin{proof}
	By Lemma \ref{idempotent:lm-1} , we have that $u\widesim_{\ell_{m-1}}uu$, which means that $\hat\psi_{m-1}(u)$ is an idempotent of the semigroup $\operatorname{Synt}(\ell_{m-1})$. And, as $\operatorname{Synt}(\ell_{m-1})$ is a subsemigroup of $\operatorname{Synt}(\ell_{m})$, we have that $\lambda\bigl(\hat\psi_{m-1}(u)\bigr)=w_{m}$.
	
	Consider the set $B_{m}'=\hat\psi_{m}^{-1}\Bigl(w_{m}\lambda\bigl(\operatorname{Synt}(\ell_{m-1})\bigr)w_{m}\Bigr)$.
	Since $\bar{x}_{m}$ and $\bar{y}_{m}$ are not letters of $\hat\gamma(U_{m-2})$, we conclude that pseudowords in $\hat\gamma\Bigl(\widehat{\Sigma_{m-2}^{+}}\Bigr)$ contain neither $\bar{x}_{m}$ nor $\bar{y}_{m}$.
	It follows that $\hat\gamma|_{\widehat{\Sigma_{m-2}^{+}}}:\widehat{\Sigma_{m-2}^{+}}\rightarrow B_{m}'$.
	Since $\operatorname{Synt}(\ell_{m-1})$ is an element of $L(\mathsf{R}_{m-1} \cap \mathsf{L}_{m-1})$, by Theorem \ref{Kufleitner,UV} we deduce that $\hat\gamma(U_{m-2})\widesim_{\overline{\ell_{m-1}}}\hat\gamma(V_{m-2})$.
\end{proof}
 
\begin{proof}[Proof of Proposition \ref{hard-half}]
	\label{hard-proof}
 	Define $\Sigma_{1}=\lbrace s,t,x_{1},y_{1}  \rbrace $ and $\Sigma_{m}=\Sigma_{m-1} \cup \lbrace x_{m}, y_{m} \rbrace  $.
 	Knast proved that $\ell_{2}$ is an element of $L(\mathsf{R}_{2} \cap \mathsf{L}_{2})$ \cite{Knast:1983a}.
 	Assume inductively that $\operatorname{Synt}(\ell_{m-1})$ is a member of $L(\mathsf{R}_{m-1} \cap \mathsf{L}_{m-1})$. We prove that $\operatorname{Synt}(\ell_{m})$ is a member of $L(\mathsf{R}_{m} \cap \mathsf{L}_{m})$.
 	
 	Let $\gamma:\widehat{\Sigma_{m}^{+}}\rightarrow B_{m}$ be a continuous semigroup homomorphism. 
 	We claim that $\gamma(U_{m-1})\sim_{\overline{\ell_{m}}}\gamma(V_{m-1})$.
 	To establish the claim, we consider several cases.
 	
 	Suppose that $a\gamma(U_{m-1})b \in \overline{\ell_{m}}$, that is, there exists a factorization $a\gamma(U_{m-1})b = a' \bar x_m b' \bar y_m c'$ with $a' \in \overline{(A_{m-1} \cup \{\bar x_m\})^+}\cup \{1\}$, $c'\in \overline{(A_{m-1} \cup \{\bar y_m\})^+}\cup \{1\}$, and $b' \in \overline{\ell_{m-1}}$.
 	Taking into account the recursive definition of the $U_m$ (Theorem \ref{Kufleitner,UV}), we obtain the following formula:
 	\begin{equation}
 		a\bigl(\gamma(U_{m-2})\gamma(x_{m-1})\bigr)^\omega \gamma(U_{m-2})\bigl(\gamma(y_{m-1})\gamma(U_{m-2})\bigr)^\omega b = 	a\gamma(U_{m-1})b =  a' \bar x_m b' \bar y_m c'.
 		\label{equality_Um}
 	\end{equation}
 	Similarly, from the definition of the $V_m$, we obtain:
 	\begin{equation}
 		a\gamma(V_{m-1})b=a\bigl(\gamma(U_{m-2})\gamma(x_{m-1})\bigr)^\omega \gamma(V_{m-2})\bigl(\gamma(y_{m-1})\gamma(U_{m-2})\bigr)^\omega b.
 		\label{equality_Vm}
  	\end{equation}
 	The various cases come from invoking equidivisibility in the equality \eqref{equality_Um}.
 	
 	Suppose that $\bar x_m, \bar y_m \in c\Bigl(a\bigl(\gamma(U_{m-2})\gamma(x_{m-1})\bigr)^\omega\Bigr)$.
 	By equidivisibility, we infer that $a \bigl(\gamma(U_{m-2})\gamma(x_{m-1})\bigr)^\omega=a' \bar x_m b' \bar y_m c''$, where $\bar x_m$ does not occur in $b', c''$, $\gamma(U_{m-2}), \gamma(y_{m-1}), b$, and $\bar y_m$ does not occur in $a', b'$.
 	We conclude that $\bar x_m $ does not occur in $\gamma(U_{m-2})$, whence also not in $\gamma(V_{m-2})$, and therefore, in view of \eqref{equality_Vm}, $$a\gamma(V_{m-1})b=a' \bar x_m b' \bar y_m c'' \gamma(V_{m-2})\bigl(\gamma(y_{m-1})\gamma(U_{m-2})\bigr)^\omega b \in \overline{\ell_{m}}.$$
 	The case where $\bar x_m$ and $\bar y_m$ belong to $c\Bigl(\bigl(\gamma(y_{m-1})\gamma(U_{m-2})\bigr)^\omega b\Bigr)$ is dual.
 	
 	Assume that $\bar x_m$ is a member of $c\Bigl(a\bigl(\gamma(U_{m-2})\gamma(x_{m-1})\bigr)^\omega\Bigr)$ and $\bar y_m$ is a member of $c\Bigl(\bigl(\gamma(y_{m-1})\gamma(U_{m-2})\bigr)^\omega b\Bigr)$.
 	We deduce, by equidivisibility, that $a\bigl(\gamma(U_{m-2})\gamma(x_{m-1})\bigr)^\omega=a' \bar x_m b_1' $ and $\bigl(\gamma(y_{m-1})\gamma(U_{m-2})\bigr)^\omega b= b_2' \bar y_m c'$, where $\bar x_m $ does not occur in $b_1',b_2',c'$, $\gamma(U_{m-2})$, and $\bar y_m$ does not occur $a',b_1',b_2',\gamma(U_{m-2})$.
 	We, thus, reach the conclusion that $a \gamma(U_{m-1}) b=a' \bar x_m b_1' \gamma(U_{m-2}) b_2' \bar y_m c' $, where $b_1' \gamma(U_{m-2}) b_2' \in \overline{\ell_{m-1}}$.
 	Since $\operatorname{Synt}(\ell_{m-1})$ belongs to $L(\mathsf{R}_{m-1} \cap \mathsf{L}_{m-1})$, by lemma \ref{induction}, it follows that $b_1'\gamma(V_{m-2})b_2' \in \overline{\ell_{m-1}}$ and, thus, in view of \eqref{equality_Vm}, $a\gamma(V_{m-1})b$ is a member of $\overline{\ell_{m}}$.
 	
 	As every continuous homomorphism $\gamma:\widehat{\Sigma_{m}^{+}}\rightarrow B_{m}$ determines a continuous homomorphism $\gamma':\widehat{\Sigma_{m}^{+}}\rightarrow w_m \operatorname{Synt}(\ell_m)w_m$, for any idempotent $w_m$ of $\operatorname{Synt}(\ell_m)$, and since $\gamma(U_{m-1})\sim_{\overline{\ell_m}}\gamma(V_{m-1})$, we conclude that $\gamma'(U_{m-1})=\gamma'(V_{m-1})$. 
 	We have thus proven that $\operatorname{Synt}(\ell_m)$ belongs to $L(\mathsf{R}_{m} \cap \mathsf{L}_{m})$.
\end{proof}

\section{Acknowledgements}
This work was developed in the framework of the project 'Novos Talentos em Matemática' of the Gulbenkian Foundation, whose support is gratefully acknowledged.

I also want to show my gratitude to Manfred Kufleitner for sharing his insights that greatly helped move my research forward.

I would like to thank my tutor, Professor Jorge Almeida, for sharing his pearls of wisdom with me and for his guidance throughout this project, without whose help and advice this research would never had been possible. I am immensely indebted for the assistance he provided in the writing of this paper and for the comments that greatly improved the manuscript.

\bibliographystyle{amsplain}
\bibliography{ref-sgps-utf8}

\end{document}